\newcommand{\bs}[1]{\boldsymbol{#1}}
\newtheorem{definition}{Definition}
\newtheorem{assumption}{Assumption}
\newtheorem{remark}{Remark}
\newtheorem{corollary}{Corollary}
\newtheorem{lemma}{Lemma}
\newtheorem{theorem}{Theorem}
\newcommand\bib@setcolor[1]{%
  \ifcsname bib@colored@#1\endcsname
    \expandafter\color\expandafter{\csname bib@colored@#1\endcsname}
  \else
    \normalcolor
  \fi
}
\title{\LARGE \bf
On the Existence of Linear Observed Systems \\ on Manifolds with Connection
}
\author{Changwu Liu$^{1}$ and Yuan Shen$^{2}$
\thanks{$^{1}$Changwu Liu is with Department of Electronic Engineering, Tsinghua University, Beijing, China
        {\tt\small liucw\_ee@tsinghua.edu.cn}}%
\thanks{$^{2}$Yuan Shen is with Department of Electronic Engineering, Tsinghua University, Beijing, China
        {\tt\small shenyuan\_ee@tsinghua.edu.cn}}%
}
\begin{document}

\vspace*{\fill}
\begin{strip}
\copyright{ 2024 IEEE.} Personal use of this material is permitted. Permission from IEEE must be obtained for all other uses, in any current or future media, including reprinting/republishing this material for advertising or promotional purposes, creating new collective works, for resale or redistribution to servers or lists, or reuse of any copyrighted component of this work in other works.

This is the Author Accepted Version of: C. Liu and Y. Shen, ``On the Existence of Linear Observed Systems on Manifolds with Connection,'' \emph{IEEE Control Systems Letters}, 2024.

DOI: 10.1109/LCSYS.2024.3506222

\end{strip}
\vspace*{\fill}
\newpage

\maketitle
\thispagestyle{empty}
\pagestyle{empty}

\begin{abstract}

Linear observed systems on manifolds are a special class of nonlinear systems whose state spaces are smooth manifolds but possess properties similar to linear systems. Such properties can be characterized by preintegration and exact linearization with Jacobians independent of the linearization point. Non-biased IMU dynamics in navigation can be constructed into linear observed settings, leading to invariant filters with guaranteed behaviors such as local convergence and consistency. In this letter, we establish linear observed property for systems evolving on a smooth manifold through the connection structure endowed upon this space. Our key findings are the existence of linear observed systems on manifolds poses constraints on the curvature of the state space, beyond requiring the dynamics to be compatible with some connection-preserving transformations. Specifically, the flat connection case reproduces the characterization of linear observed systems on Lie groups, showing our theory is a true generalization.

\end{abstract}

\begin{keywords}
Geometric Methods, Nonlinear Observers, Nonlinear Systems, Navigation.        
\end{keywords}

\section{Introduction}

\PARstart{L}{inearity} of a dynamical system can be characterized from many aspects, perhaps the two most important ones are the exactness of autonomous error evolution and the possibility to conduct preintegration\cite{Pre_Int,PreIntMath}. These two properties can be generalized beyond linear systems on Euclidean spaces when the state takes value in a smooth manifold. We term linear observed for systems evolving on manifold state space which admit the above two characteristics. It's possible to design nonlinear observers for linear observed systems on non-Euclidean spaces with guaranteed local convergence\cite{InEKF, LoS} or consistency\cite{ExploitSymm}. For instance, estimation of the attitude, position and velocity of a rigid body in motion using IMU (inertial measurement unit) is a central challenge in the field of navigation. Such problem is naturally formulated in a 9-$\dim$ manifold state space ${\rm SO}(3)\times\mathbb{R}^6$ not diffeomorphic to $\mathbb{R}^9$. To one's first sight, non-biased IMU dynamics is highly nonlinear. However, an ingenious ${\rm SE}_2(3)$ group structure\cite{InEKF, TFG} can be given to the above state space which makes the system linear observed, leading to a huge amount of successful IMU-based navigation algorithms\cite{VI-InEKF,InGVIO,ContactAid,InSmoothingCDC}.

Unlike endowing a global group structure on the state manifold in advance\cite{InEKF,LoS,TFG,AnnuRevInEKF}, we attempt to generalize linear observed properties to systems evolving on arbitrary smooth manifolds and study the conditions for the existence of this class of systems. To rigorously define linear observed properties, we need extra structures on the state manifold besides smoothness to relate different tangent spaces and to move from arbitrary points to their neighborhoods by tangent vectors. Retraction\cite{Opt_Mfd} defined pointwise is not enough for our purpose. A fundamental structure on manifold called connection\cite{DiffGeo} is needed. Only a few papers\cite{KF_mfd} have noted the importance of connection in observer design in the control community. None of them uses such structure for the purpose of studying linear observed properties. The key advantage is that a  connection can be globally established on an arbitrary smooth manifold while a group structure can not be. 

The flow emanating from a point $\bs{x}\in\mathbb{R}^m$ of a linear system on $\mathbb{R}^m$ is the action of some evolving linear isomorphism acting on $\bs{x}$. Connection-preserving, a.k.a. affine, transformation is the generalization of such linear isomorphism. Exact linearization is thus modelled as the compatibility between the system flow and some affine transformation. Preintegrability can be viewed as a self-similarity constraint to the flow field. Linear observed systems in this letter are then defined by the ability to conduct preintegration and exact linearization. The central theoretical result is that the existence of linear observed systems on a manifold endowed with connection requires restrictions to the state space itself, e.g. the curvature, in addition to the special forms that the dynamics should obey. Specifically, the flat Cartan (+)-connection case of our theory reproduces the characterization of linear observed system on Lie groups. Our key contributions are listed herein:
\begin{itemize}
\item we have rigorously generalized the concept of linear observed systems to dynamics evolving on arbitrary smooth manifolds leveraging the connection structure, covering e.g. Riemannian manifold and Lie groups;
\item the existence of linear observed systems on manifolds admitting a connection poses conditions on the state manifold itself, beyond the requirements on the structure of the system flow; specifically, the flat case reproduces characterization of group affine systems;
\item an example of observer design on $S^2$ is provided to reveal the mechanisms of our generalization. 
\end{itemize}

Section~\ref{sec::pre} introduces necessary mathematical preliminaries. Our core theory rolls out in section~\ref{sec::theory}. An example from navigation with state space $S^2$ is provided to reveal the mechanisms of the theory. Section~\ref{sec::conclusion} concludes this letter. All proofs are detailed in the appendix.

\section{Preliminaries}\label{sec::pre}

Mathematical preliminaries and notations
used in this letter are briefly summarized herein. The concepts in this section are taken from \cite{DiffGeo,DGLG,VisualDG}, and readers unfamiliar with those may find such references useful. Let $M$ denote a smooth manifold, a topological space locally diffeomorphic to Euclidean space with smooth transition maps whenever two coordinate charts overlap. For any $p\in M$, let $T_pM$ be the tangent space at $p$ and $TM$ be the tangent bundle over $M$. $T^*_pM$ denotes the dual vector space of $T_pM$. Let $\Gamma(TM)$ denote the vector field over $M$, defined as a smooth cross section of $TM$. $\Gamma(TM)$ is identified as either infinite-dimensional $\mathbb{R}$-vector space or $C^\infty(M)$-module contextually, where $C^\infty(M)$ is the commutative ring of smooth functions on $M$. An $(r,s)$-tensor field $T$ on $M$ is a multi-linear map from $(\Gamma(T^*M))^r\times(\Gamma(TM))^s$ to $C^\infty(M)$ and the collection of such tensors is denoted by $(\otimes^r\Gamma(TM))\otimes(\otimes^t\Gamma(T^*M))$ using the tensor product $\otimes$.

An affine connection $\nabla$ is an extra structure on a manifold other than smoothness. $\nabla$ maps a vector field and an $(r,s)$-tensor field to another $(r,s)$-tensor field with the following properties: (i) $\mathbb{R}$-linearity in the upper slot: $\nabla_X(k_1T+k_2S)=k_1\nabla_XT+k_2\nabla_XS$ (ii) $C^\infty(M)$-linearity in the lower entry: $\nabla_{fX+Y}T=f\nabla_XT+\nabla_YT$ (iii) Leibniz rule applied to tensor fields $\nabla_X(T(\omega,Y))=(\nabla_XT)(\omega,Y)+T(\nabla_X\omega,Y)+T(\omega,\nabla_XY)$, where $k_1,k_2\in\mathbb{R}$, $f\in C^\infty(M)$, $X,Y$ are vector fields, $\omega\in\Gamma(T^*M)$ is a covector field and $T, S$ are $(1,1)$-tensor fields.

Given a point $p\in M$ and $v\in T_pM$, there is a unique curve $\gamma(\lambda):\mathbb{R}\rightarrow M$ passing $p$ at $\lambda=0$ with velocity $v$ satisfying $\nabla_{\dot\gamma}\dot\gamma=0$, and such curve is termed geodesic denoted by geodesical exponential $\exp_p(\lambda v)=\gamma(\lambda)$, where $\dot\gamma$ is the velocity vector field defined on the curve. Moreover, for an arbitrary curve $\mu:[0,1]\rightarrow M$, a vector field $X$ is called parallel vector field on $\mu$ if $\nabla_{\dot\mu}X=0$. Such vector field $X$ is said to be generated by parallel transport of $X_{\mu(0)}\in T_{\mu(0)}M$. Parallel transport along $\mu$ from $a$ to $b$ is denoted by $\mathcal{P}_{a,\mu}^b$. $\mu$ can be omitted if it's clear from context.

Though $\nabla$ is not a tensor itself, it induces important tensorial objects with fruitful geometric meanings. The Riemannian curvature $\mathcal{R}$ is a $(1,3)$-tensor defined by $\mathcal{R}(X,Y)Z=\nabla_X\nabla_YZ-\nabla_Y\nabla_XZ-\nabla_{[X,Y]}Z$, where $[\cdot,\cdot]$ is the Lie bracket for vector fields. The torsion is a $(1,2)$-tensor denoted by $\tau(X,Y)=\nabla_XY-\nabla_YX-[X,Y]$.

Let $f$ be a transformation on $M$, i.e. a smooth, one-to-one and onto map on $M$. The vector fields $X$ and $X^\prime$ are $f$-related if $(df)_p(X_p)=X^\prime_{f(p)},\forall p\in M$, where $(df)_p:T_pM\rightarrow T_{f(p)}M$ is the push-forward of $f$. $f$ is said to be an affine transformation, a.k.a. connection-preserving map, if for any $f$-related vector field pairs $(X,X^\prime)$ and $(Y,Y^\prime)$, $(df)(\nabla_XY)=\nabla_{X^\prime}Y^\prime$ holds. All affine transformations of $M$ form a Lie group under composition. A local affine transformation $f_U:U\rightarrow M$ on an open subset $U$ is the connection-preserving transformation between $U$ and $f(U)$.

\section{Main Theory}\label{sec::theory}

Our main theory is partitioned into seven parts. We first give a definition of dynamics on manifolds from the vector field perspective. Vector-error representation uses the connection structure on the manifold to rigorously link neighboring flow lines. Linear observed properties are then posed on the flow field from two perspectives: (a) exact linearization with state-estimate-independent Jacobians; (b) preintegration. These two aspects are proven to be equivalent. Existence of linear observed systems is discussed to reveal the extra conditions posed on the curvature of the state space. Parallel curvature or flatness are two sufficient conditions guaranteeing the existence of linear observed systems. The flat Cartan connection case is investigated to show our theory reproduces characterization of group affine systems, being a true generalization. We finally discuss the compatibility with metric, showing the connections encountered in this letter are not necessarily Riemannian.

\subsection{Dynamics on Smooth Manifolds}

A dynamics on a manifold from the vector field point-of-view is defined by assigning a smooth vector field to every vector-valued control input. Details are in \cite{EqF,EqSys,EqFCDC}.

\begin{definition}[Dynamics on Manifold]
    Let $V$ be a finite-dimensional vector space, namely control input. \textbf{A dynamics on a smooth manifold} $M$ is a smooth map $\mathfrak{f}:V\rightarrow\Gamma(TM)$ assigning a smooth vector field $W^{u}$ on $M$, a.k.a. flow field, to every vector-valued control input $u\in V$.
\end{definition}

\begin{definition}[Flow Line]
  \textbf{A flow line of dynamics} $\mathfrak{f}$ on a smooth manifold $M$ is a local one-parameter group of local diffeomorphisms of $W^{u_t}$, denoted $\Phi:(-\epsilon,\epsilon)\times U\rightarrow M,\ (t,p)\mapsto\Phi^t(p)$ with $\epsilon>0$ and $U$ being some open subset of $M$, satisfying:
  \begin{enumerate}
    \item $\forall t\in(-\epsilon,\epsilon)$, $\Phi^t(\cdot):U\rightarrow M$ is smooth;
    \item $\forall p\in U$, $\Phi^t(p):(-\epsilon,\epsilon)\rightarrow M$ in its domain is the integral curve of $W^{u_t}$.
  \end{enumerate}
\end{definition}

So far, the well-definedness of dynamics and its flow line claims no structure beyond smoothness. If the manifold is complete, then the domain of any flow line can be extended to all of $\mathbb{R}$. $\Phi^t(p)$ is then the global one-parameter group of local diffeomorphisms.

\subsection{Vector Error Representation}

We wish to link neighboring flow lines $\Phi^t(p_1)$ and $\Phi^t(p_2)$ of the vector field $W^{u_t}$. A retraction at $p\in M$\cite{Opt_Mfd} allows to move from $p$ to its neighborhood in the direction of a tangent vector $v\in T_pM$. Retractions at different anchor points have no relationship with each other, which are not enough for analyzing the `error-vector' distribution along an entire curve. A fundamental extra structure required besides smoothness is the connection $\nabla$, as introduced in Section~\ref{sec::pre}.

A normal coordinate system $\iota\circ\exp_p^{-1}:U\rightarrow\mathbb{R}^{\dim M}$ at $p\in M$ can be defined by the isomorphism $\iota:T_pM\rightarrow\mathbb{R}^{\dim M}$ identifying $T_pM$ with $\mathbb{R}^{\dim M}$ by virtue of the fact that the $\nabla$-induced $\exp_p$ is a diffeomorphism between $T_pM$ and an open subset $U$ of $M$. Let $x^1,...,x^{\dim M}$ be the normal coordinate system defined above at $p$. Let $U(p,\rho)$ be the neighborhood of $p$ by $\sum_{i=1}^{\dim M}(x^i)^2<\rho^2$. By Theorem 8.7 of Chapter III of \cite{DiffGeo}, there is a positive number $a$, such that for $0<\rho<a$, any two points in $U(p,\rho)$ can be joined by a geodesic which totally lies in $U(p,\rho)$. On such $U$, we can define a smooth error function $\mathfrak{er}:U\times U\rightarrow TM,\ (q,p)\mapsto v$, where $v\in T_pM$ and $\exp_p(v)=q$.

\begin{definition}[Local Vector-error Representation]
  \textbf{A local vector-error representation} of dynamics $\mathfrak{f}$ in an open normal neighborhood $U$ of a geodesically complete manifold $(M,\nabla)$ is a triple $(\Phi^t,\mathfrak{er},u_t)$, where $\Phi^t$, $\mathfrak{er}$, $u_t$ are flow lines of $\mathfrak{f}$, an error function defined on $U\times U$, and a given input sequence respectively.
\end{definition}

\begin{theorem}\label{theorem::a}
    If $p$ and $\hat p$ are in some normal neighborhood $U$ of a complete manifold $M$, where any point pair in $U$ can be joined by a geodesic, then the local vector-error representation $(\Phi^t,\mathfrak{er},u_t)$ induces a smooth vector field $E$ on $\Phi^t(\hat p)$ by $\mathfrak{er}(\Phi^t(p),\Phi^t(\hat p))$ locally in $U$, i.e. $\exp_{\Phi^t(\hat p)}(E_{\Phi^t(\hat p)})=\Phi^t(p),\ \forall t\in I$ and $I:=\left\{t\in\mathbb{R}\vert \Phi^t(p)\in U,\Phi^t(\hat p)\in U\right\}$.
\end{theorem}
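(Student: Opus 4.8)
The plan is to recognize the claimed object $E$ as nothing more than the composition of the jointly smooth flow with the already-smooth error function $\mathfrak{er}$, so that the entire statement reduces to checking that this composition is well-defined, correctly based, and smooth on the time set $I$. First I would collect the two smoothness ingredients. The error function $\mathfrak{er}:U\times U\to TM$ is smooth on the convex normal neighborhood $U$: since Theorem~8.7 of \cite{DiffGeo} guarantees that any two points of $U$ are joined by a unique geodesic lying in $U$, the map $(q,v)\mapsto(q,\exp_q v)$ restricts to a diffeomorphism from an open subset of $TM$ onto $U\times U$, so recovering $v$ from $(q,\exp_q v)$ is smooth and $\mathfrak{er}$ depends smoothly on both arguments. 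Separately, because $W^{u_t}$ is a smooth (time-dependent) vector field, the flow map $(t,q)\mapsto\Phi^t(q)$ is jointly smooth in time and initial point by the standard smooth-dependence theorem for ordinary differential equations.

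Next I would verify well-definedness and the base-point property. For $t\in I$ both $\Phi^t(p)$ and $\Phi^t(\hat p)$ lie in $U$, so $\mathfrak{er}(\Phi^t(p),\Phi^t(\hat p))$ is defined and, by the definition of $\mathfrak{er}$, is the unique tangent vector anchored at the \emph{second} argument; hence
\[
E_{\Phi^t(\hat p)}:=\mathfrak{er}\bigl(\Phi^t(p),\Phi^t(\hat p)\bigr)\in T_{\Phi^t(\hat p)}M .
\]
This exhibits $E$ as a vector field along the curve $t\mapsto\Phi^t(\hat p)$, and the defining identity $\exp_{\Phi^t(\hat p)}(E_{\Phi^t(\hat p)})=\Phi^t(p)$ is immediate from the relation $\exp_q\bigl(\mathfrak{er}(r,q)\bigr)=r$ that characterizes $\mathfrak{er}$.

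For smoothness, I would first observe that $I=\{t:\Phi^t(p)\in U\}\cap\{t:\Phi^t(\hat p)\in U\}$ is open, being the intersection of preimages of the open set $U$ under the continuous maps $t\mapsto\Phi^t(p)$ and $t\mapsto\Phi^t(\hat p)$; it contains $0$ since $p,\hat p\in U$. On $I$ the assignment factors as
\[
t\ \longmapsto\ \bigl(\Phi^t(p),\Phi^t(\hat p)\bigr)\ \longmapsto\ \mathfrak{er}\bigl(\Phi^t(p),\Phi^t(\hat p)\bigr),
\]
a composition of smooth maps taking values in $TM$, and is therefore smooth.

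The only genuinely non-elementary step is the joint smoothness of $\mathfrak{er}$, that is, the smooth dependence of $\exp_q^{-1}$ on the base point $q$, and I expect this to be the main obstacle. It is resolved by applying the inverse function theorem to $(q,v)\mapsto(q,\exp_q v)$ together with the convexity of $U$: the cited Theorem~8.7 ensures $\exp_q^{-1}$ is single-valued throughout $U\times U$, which removes any branch ambiguity and upgrades the locally defined inverse to a globally smooth one on the entire neighborhood, thereby justifying the smoothness of the outer map in the composition above.
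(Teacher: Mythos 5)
Your proposal is correct and takes essentially the same route as the paper: both arguments reduce the claim to the fact that for $t\in I$ the two trajectories $\Phi^t(p)$ and $\Phi^t(\hat p)$ remain in the convex normal neighborhood $U$, so the error function $\mathfrak{er}$ applies and yields the vector field $E$ with $\exp_{\Phi^t(\hat p)}(E_{\Phi^t(\hat p)})=\Phi^t(p)$. The only difference is one of detail: the paper's two-sentence proof takes the smoothness of $\mathfrak{er}$ as given (it is built into its definition in the text via Theorem~8.7 of Chapter~III of the cited reference), whereas you additionally justify that smoothness via the inverse function theorem applied to $(q,v)\mapsto(q,\exp_q v)$ and spell out the openness of $I$ and the composition argument, which is a legitimate filling-in of the same proof rather than a different approach.
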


The notations used are closely related to observer design. $p$ denotes the true state on $M$ and $\hat p$ is the estimation of the former. $\Phi^t(\hat p)$ denotes the estimated trajectory from $\hat p$ and $\Phi^t(p)$ is viewed as the true trajectory from $p$. $E_{\Phi^t(\hat p)}\in T_{\Phi^t(\hat p)}M$ is termed the error-vector at the estimate $\Phi^t(\hat p)$. This is a true abstraction of the error-state Kalman-type observer whose process model is a copy of the system flow and whose gains are computed by the Jacobians of the error-state ODE\cite{KF_mfd,InEKF}. Completeness is a must such that the error vector is well defined in the entire of each tangent space.

It's worth noting that the most we could expect from the observer designed on manifolds from vector-error representation is local convergence, because geodesical exponential may not be surjective, e.g. the Lie-exponential of $\text{SL}(n,\mathbb{R})$ at the identity.

\subsection{Linear Observed Property from Exact Linearization}

We now start to pose additional `linear observed property' on the flow of a given dynamics. Consider a linear system on the Euclidean space $\mathbb{R}^m$ whose state-space model is $\dot{\bs{x}}=\bs{Ax}$. The flow line emanating from $\bs{x}_0$ can be written as $\Phi^t(\bs{x}_0)=\bs{F}_t\bs{x}_0$, where $\bs{F}_t$ is an operator satisfying matrix ODE $\frac{d}{dt}\bs{F}_t=\bs{A}_t\bs{F}_t$. Any point $\bs{x}=\bs{x_0}+\delta\bs{x}$ in the neighborhood of $\bs{x_0}$ can be reached by moving along a geodesic in the direction of $\delta\bs{x}\in T_{\bs{x}_0}\mathbb{R}^m$. A key observation is that the neighboring flow line can be expressed by $\Phi^t(\bs{x}_0+\delta\bs{x})=\bs{F}_t\delta\bs{x}+\Phi^t(\bs{x}_0)$, where $+$ and $\bs{F}_t$ can be understood as the geodesical exponential and the operator on $\mathbb{R}^m$ preserving Euclidean connection (linearity) respectively. Affine transformation, as a connection-preserving map, thus serves as the core concept in the generalization of exact linearization property to systems on manifolds.

\begin{definition}[Exact Linearization Independent of State]\label{definition::el}
  The local vector-error representation $(\Phi^t,\mathfrak{er},u_t)$ of a system in an open subset $U$ of a manifold $(M,\nabla)$ is said to possess \textbf{exact linearization independent of state} property if:
	\begin{enumerate}
		\item $\forall\hat p\in U$, there exists local affine transformations $\psi^{u_t}:U\rightarrow M$ of $\nabla$, such that any flow line $\Phi^t(p)$ from $p=\exp_{\hat p}(v)$ in the neighborhood of $\hat p$, has the form of $\Phi^t(p)=\exp_{\Phi^t(\hat p)}\circ (d\psi^{u_t})(v)$, where $v\in T_{\hat p}M$;
		\item the linear map $\bs{F}_t:T_{\hat p}M\rightarrow T_{\Phi^t(\hat p)}M$, defined by $\bs{F}_t:=(d\psi^{u_t})_{\hat p}$ with both domain and image chosen to be normal frames, is determined by matrix ODE $\frac{d}{dt}\bs{F}_t=\bs{A}_{u_t}\bs{F}_t$, where $\bs{A}_{u_t}$ is an endomorphism of $T_{\Phi^t(\hat p)}M$ depending only on $u_t$.
	\end{enumerate}
\end{definition}

If an error-state observer is designed for a local vector-error representation with exact linearization independent of state, its process model is a copy of the flow. Unlike general on-manifold EKFs, its error vector ODE is exact without any neglected high-order remainders, leading to theorem~\ref{theorem::b}.
\begin{theorem}\label{theorem::b}
  Consider designing an error-state observer for a system in an open subset $U$ of a manifold $(M,\nabla)$ with exact linearization property on $(\Phi^t,\mathfrak{er},u_t)$. Denote $\hat p\in U$ to be the estimated state and $\hat p\boxplus\bs{e}:=\exp_{\hat p}(\bs{e})$ to be the retraction induced by $\nabla$ for $\bs{e}\in T_{\hat p}M$. The propagation of the error vector defined in the normal frame at the estimated state $\hat{p}_t$ strictly follows linear ODE: $\dot{\bs{e}}_t=\bs{A}_{u_t}\bs{e}_t$. 
\end{theorem}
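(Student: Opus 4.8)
The plan is to reduce the statement to an elementary matrix computation by propagating the error vector through the closed-form flow supplied by Definition~\ref{definition::el}, then differentiating in a fixed normal-frame trivialisation. First I would fix notation: write $\hat{p}_t := \Phi^t(\hat p)$ for the evolving estimate and recall from Theorem~\ref{theorem::a} that the error vector at $\hat{p}_t$ is $\bs{e}_t := \mathfrak{er}(\Phi^t(p),\hat{p}_t)\in T_{\hat{p}_t}M$, i.e.\ the unique tangent vector with $\exp_{\hat{p}_t}(\bs{e}_t)=\Phi^t(p)$. The initial error vector $\bs{e}_0 = v = \mathfrak{er}(p,\hat p)$ is fixed once and for all by the initialisation $p=\exp_{\hat p}(v)$, and in particular does not depend on $t$.

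Next I would invoke clause~(1) of Definition~\ref{definition::el}, which gives the neighbouring flow line in the closed form $\Phi^t(p)=\exp_{\hat{p}_t}\circ(d\psi^{u_t})(v)$, where $\psi^{u_t}(\hat p)=\hat{p}_t$ so that $(d\psi^{u_t})_{\hat p}(v)\in T_{\hat{p}_t}M$. Comparing this with $\exp_{\hat{p}_t}(\bs{e}_t)=\Phi^t(p)$ and using that $\exp_{\hat{p}_t}$ is a diffeomorphism onto the normal neighbourhood, hence injective, I conclude $\bs{e}_t=(d\psi^{u_t})_{\hat p}(v)=\bs{F}_t\,\bs{e}_0$, the last equality read as an identity of coordinate vectors once the codomain $T_{\hat{p}_t}M$ is expressed in the normal frame, exactly as stipulated in clause~(2). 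Differentiating in $t$ and using the matrix ODE $\tfrac{d}{dt}\bs{F}_t=\bs{A}_{u_t}\bs{F}_t$ together with $\dot{\bs{e}}_0=0$ then yields $\dot{\bs{e}}_t=\dot{\bs{F}}_t\bs{e}_0=\bs{A}_{u_t}\bs{F}_t\bs{e}_0=\bs{A}_{u_t}\bs{e}_t$, which is the asserted linear error ODE.

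The hard part will not be the algebra but the frame bookkeeping: $\bs{e}_t$ lives in the time-varying tangent space $T_{\hat{p}_t}M$, so the symbol $\tfrac{d}{dt}\bs{e}_t$ is a priori ambiguous and could in principle pick up connection-covariant or holonomy terms. The point I would stress is that clause~(2) of Definition~\ref{definition::el} has already trivialised the bundle along $\hat{p}_t$ by a normal frame, in which $\bs{F}_t$ is a genuine matrix solving an honest ODE; reading $\bs{e}_t=\bs{F}_t\bs{e}_0$ in that same trivialisation makes the differentiation ordinary matrix calculus with no extra terms. I would therefore make explicit that $\tfrac{d}{dt}$ denotes the derivative of the coordinate vector in this fixed normal-frame representation. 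The two remaining checks --- that $\psi^{u_t}$ maps $\hat p$ to $\hat{p}_t$ so the differential lands in the correct fibre, and that injectivity of $\exp_{\hat{p}_t}$ on the normal neighbourhood legitimises inverting the exponential --- are immediate from the hypotheses.
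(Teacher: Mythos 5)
Your proof is correct and takes essentially the same route as the paper's: identify $\bs{e}_t=(d\psi^{u_t})_{\hat p}(\bs{e}_0)=\bs{F}_t\bs{e}_0$ from clause~(1) of Definition~\ref{definition::el}, then differentiate using the matrix ODE $\tfrac{d}{dt}\bs{F}_t=\bs{A}_{u_t}\bs{F}_t$. Your added care --- invoking injectivity of $\exp_{\hat{p}_t}$ on the normal neighbourhood to equate the two expressions for the error, and noting that the normal-frame trivialisation is what makes $\tfrac{d}{dt}\bs{e}_t$ an ordinary matrix derivative with no connection terms --- only makes explicit what the paper leaves implicit.
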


Exact linearization independent of state is just one aspect of linear observed property. Preintegration is the other side of the coin.

\subsection{Linear Observed Property from Preintegration}

Preintegration is understood as a local self-similarity condition of the flow field, meaning any patch of the flow field is capable of being reconstructed by some transformation of one single neighboring flow line. 

From the vector field point of view, consider a null-homotopy 2-dim patch on a locally simply-connected manifold $M$ and denote $H(t,s):[0,1]\times[0,1]\rightarrow M$ to be the homotopy function. Let $E:=\partial_sH$ be the tangent vectors of a family of curves $H(\cdot,s):[0,1]\rightarrow M,\forall t$ and $T:=\partial_tH$ be the tangent vectors of a family of curves $H(t,\cdot):[0,1]\rightarrow M,\forall s$. The following conditions, denoted `condition (*)', can be posed on the patch, as in Fig.~\ref{fig::patch_condition}:
\begin{enumerate}
	\item $H(t,0),H(t,1)$ are $\Phi^t(p_1),\Phi^t(p_2)$ respectively;
	\item $E$ restricted on $\Phi^t(p_2)$ is the error-vector w.r.t. $\Phi^t(p_1)$;
  \item For $t\in[0,1]$, the family of geodesics $s\in[0,1]\mapsto\exp_{\Phi^t(p_2)}(s E_{{\Phi^t(p_2)}})$ happens to be $H(\cdot,s)$ parameterized by $s$. Moreover, $E$ could be extended to the entire patch by parallelly transported along the corresponding geodesics, i.e. $E_{H(s,t)}$ is the parallel transport of $E_{\Phi^t(p_2)}$ along the geodesic $H(\cdot, s)$ parameterized by $s$. 
\end{enumerate}

\begin{figure}[t]
  \centering
  \parbox{3in}{
    \centering
    \includegraphics[scale=0.60]{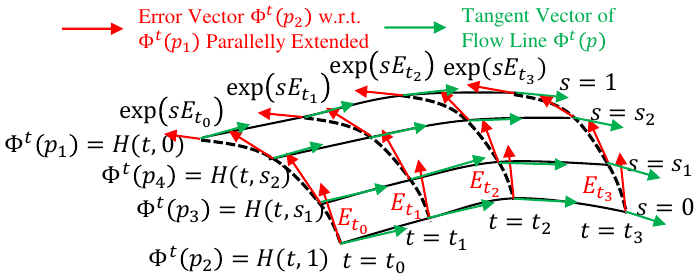}
  }
  \caption{An intuitive picture of a self-similar flow field patch and its related vector fields $E$ and $T$.}
  \label{fig::patch_condition}
\end{figure}

\begin{definition}[Self-Similar Flow Field]
  The flow field of $(\Phi^t,\mathfrak{er},u_t)$ is \textbf{self-similar} if on any patch with condition $(*)$, the intermediate curve $H(t,\cdot):[0,1]\rightarrow M$ parameterized by $t$ is again a flow line for any $s$, as in Fig.~\ref{fig::patch_condition}. 
\end{definition}

\begin{theorem}\label{theorem::c}
  Given a self-similar $(\Phi^t,\mathfrak{er},u_t)$ on a manifold $(M,\nabla)$ with parallel torsion, the tangent vectors $T$ of a family of flow lines coincide with the Jacobi field restricted to a certain geodesic at a fixed timestamp $t$.
\end{theorem}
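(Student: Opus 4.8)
The plan is to read the claim as a torsion-generalized instance of the classical fact that the variation field of a family of geodesics is a Jacobi field, and to pin down exactly where the parallel-torsion hypothesis is indispensable. First I would translate condition $(*)$ and self-similarity into two structural facts about the homotopy $H(t,s)$. Part 3 of condition $(*)$ says that for every fixed $t$ the transverse curve $s\mapsto H(t,s)$ is the geodesic $s\mapsto\exp_{\Phi^t(p_2)}(sE_{\Phi^t(p_2)})$; hence $H$ is a variation \emph{through geodesics} with geodesic parameter $s$, and $E:=\partial_sH$ is the (parallel) geodesic velocity, so $\nabla_{\partial_s}E=0$. Self-similarity says that for every fixed $s$ the longitudinal curve $t\mapsto H(t,s)$ is a flow line, so $T:=\partial_tH$ is precisely the tangent field of the family of flow lines named in the statement. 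All subsequent computation takes place in the connection induced by $\nabla$ on the pullback bundle $H^{*}TM$, using that the coordinate fields $\partial_t,\partial_s$ commute.

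Next I would assemble the three identities that drive the argument. The first is the torsion-corrected symmetry lemma: since $[\partial_t,\partial_s]=0$, a coordinate computation with the definition $\tau(X,Y)=\nabla_XY-\nabla_YX-[X,Y]$ gives $\nabla_{\partial_t}E-\nabla_{\partial_s}T=\tau(T,E)$, i.e. $\nabla_{\partial_s}T=\nabla_{\partial_t}E-\tau(T,E)$. The second is the pullback curvature identity $\nabla_{\partial_s}\nabla_{\partial_t}Z-\nabla_{\partial_t}\nabla_{\partial_s}Z=\mathcal{R}(E,T)Z$ for any section $Z$ of $H^{*}TM$. The third is the parallel-torsion hypothesis $\nabla\tau=0$, which I would use through the Leibniz rule: combined with $\nabla_{\partial_s}E=0$ it collapses $\nabla_{\partial_s}\bigl(\tau(T,E)\bigr)=(\nabla_{\partial_s}\tau)(T,E)+\tau(\nabla_{\partial_s}T,E)+\tau(T,\nabla_{\partial_s}E)$ down to the single term $\tau(\nabla_{\partial_s}T,E)$.

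The main computation is then to differentiate $T$ a second time along the geodesic. Applying $\nabla_{\partial_s}$ to $\nabla_{\partial_s}T=\nabla_{\partial_t}E-\tau(T,E)$, the curvature identity together with $\nabla_{\partial_s}E=0$ converts $\nabla_{\partial_s}\nabla_{\partial_t}E$ into $\mathcal{R}(E,T)E$, while the parallel-torsion collapse turns the second piece into $\tau(\nabla_{\partial_s}T,E)$. Assembling these and invoking the antisymmetry $\mathcal{R}(E,T)=-\mathcal{R}(T,E)$ yields the generalized Jacobi equation
\begin{equation}
\nabla_{\partial_s}\nabla_{\partial_s}T+\tau(\nabla_{\partial_s}T,\dot\gamma)+\mathcal{R}(T,\dot\gamma)\dot\gamma=0,
\end{equation}
with $\dot\gamma=E$ the geodesic velocity along the fixed-$t_0$ curve. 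This is a linear second-order ODE for $T$ along that geodesic, so $T$ is by definition a Jacobi field, which is the assertion; setting $\tau=0$ recovers the familiar Jacobi equation $\nabla_{\partial_s}\nabla_{\partial_s}T+\mathcal{R}(T,\dot\gamma)\dot\gamma=0$ as a sanity check.

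I expect the genuinely delicate point to be the torsion bookkeeping in the second differentiation, and in particular recognizing that parallel torsion, rather than the stronger torsion-free condition, is the correct hypothesis. The naive expansion of $\nabla_{\partial_s}(\tau(T,E))$ threatens a $(\nabla_{\partial_s}\tau)(T,E)$ contribution; it is exactly the vanishing of $\nabla\tau$ that removes this obstruction and keeps the torsion correction a clean first-order term $\tau(\nabla_{\partial_s}T,\dot\gamma)$, so that \eqref{} remains a self-contained linear ODE in $T$ rather than an equation contaminated by how $\tau$ varies along the geodesic. A secondary care point is fixing the signs and orderings in the pullback symmetry and curvature identities relative to the paper's conventions for $\tau$ and $\mathcal{R}$, which I would verify once in local coordinates before trusting the global manipulation.
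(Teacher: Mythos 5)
Your proposal follows the same overall strategy as the paper's own proof: use $[E,T]=0$ for the coordinate fields of the patch (the paper isolates this as a lemma), the torsion identity relating $\nabla_ET$ and $\nabla_TE$, the curvature commutation identity together with $\nabla_EE=0$, and finally the parallel-torsion hypothesis. The substantive difference is in the torsion bookkeeping, and there your version is the more careful one. The paper asserts that parallel torsion plus $[E,T]=0$ yields $\nabla_{\cdot}\nabla_ET=\nabla_{\cdot}\nabla_TE$, i.e.\ that $\nabla_{\cdot}\bigl(\tau(E,T)\bigr)=0$; but $\nabla\tau=0$ only annihilates the term $(\nabla\tau)(E,T)$ in the Leibniz expansion, not $\tau(\nabla_{\cdot}E,T)+\tau(E,\nabla_{\cdot}T)$, so along the geodesic (where $\nabla_EE=0$) the term $\tau(E,\nabla_ET)$ survives. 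Consequently the paper lands on the torsion-free-form equation $\nabla_E\nabla_ET=-\mathcal{R}(T,E)E$, whereas the correct computation --- the one you carried out --- gives the generalized Jacobi equation $\nabla_E\nabla_ET+\tau(\nabla_ET,E)+\mathcal{R}(T,E)E=0$, in which parallel torsion buys exactly the collapse of the torsion contribution to a clean first-order term rather than its disappearance. Both versions support what the theorem is used for downstream (the variation field $T$ is governed by a linear second-order initial-value problem along the fixed-$t$ geodesic, hence is a Jacobi field in the sense appropriate to a connection with torsion, and neighboring flow velocities are obtained by solving that IVP), but your accounting of where $\nabla\tau=0$ enters is the defensible one: in the paper's own flagship setting (the Cartan $(+)$-connection on a non-abelian group) the torsion is nonzero on left-invariant fields, and there is no a priori reason for $\tau(\nabla_ET,E)$ to vanish, so it cannot simply be dropped.
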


The above theorem implies the velocity vectors of neighboring flow lines can be calculated by the transformation determined by Jacobi field equation, which is an initial value problem of the corresponding ODE with respect to parameter $s$ on a geodesic at a fixed timestamp $t$.

\begin{definition}[Preintegration]
  A local vector-error representation $(\Phi^t,\mathfrak{er},u_t)$ in an open subset $U$ of a manifold $(M,\nabla)$ with parallel torsion is \textbf{preintegrable} if its flow field is self-similar and for any neighboring flow lines $\Phi^t(p_1)$ and $\Phi^t(p_2)$ in the patch with condition $(*)$, the transformation from $\Phi^t(p_2)$ to $\Phi^t(p_1)$, i.e. the vector transport by the initial-value problem of the Jacobi field ODE, is a pre-calculated transformation of any starting point on $\Phi^t(p_2)$. 
\end{definition}

\begin{remark}\label{remark::preint}
  The pre-calculated transformation defined by the descriptive languages above is independent of $t$ and maps the tangent vector $T$ from $T_{\Phi^t(p_2)}M$ to $T_{\Phi^t(p_1)}M$ along the geodesic $\gamma(s):[0,1]\rightarrow M$ connecting $p_2$ to $p_1$. By self-similarity, the homotopy function $H$ of a preintegrable system must have the form of $H(t,s)=\varphi_t\circ\gamma(s)$.
\end{remark}

This is indeed the concept of `preintegration' in the field of robotics\cite{Pre_Int,PreIntMath}, since such transformation from a pre-chosen point $p_0$ along the geodesic $\gamma(s)$ from $p_0$ is calculated just once for all initial states in the neighborhood of $p_0$ in $M$.
\begin{theorem}\label{theorem::preint_property}
  Any flow line $\Phi^t(p)$ of a preintegrable system on a manifold $M$ with connection $\nabla$ coincides with a local affine transformation $\varphi_t$ of $(M,\nabla)$ at any $t$.
\end{theorem}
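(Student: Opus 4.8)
The plan is to show that each time-slice map $\varphi_t$ implicit in the preintegrable structure is in fact a local affine transformation. By Remark~\ref{remark::preint}, self-similarity forces the homotopy patch to factor as $H(t,s)=\varphi_t\circ\gamma(s)$, where $\gamma$ is the fixed geodesic connecting $p_2$ to $p_1$ and $\varphi_t$ is a family of maps with $\varphi_0=\mathrm{id}$ on a neighborhood of the base geodesic. The target statement is equivalent to proving that $\varphi_t$ preserves the connection $\nabla$ for every admissible $t$, i.e. that $\varphi_t$ pushes geodesics to geodesics and commutes with parallel transport in the appropriate sense. First I would fix the geometric meaning of the pre-calculated transport: condition $(*)$ stipulates that $E$ is generated by parallel-transporting the error-vector along each geodesic $H(\cdot,s)$, and preintegrability stipulates that the induced transport of the velocity field $T=\partial_tH$ from $\Phi^t(p_2)$ to $\Phi^t(p_1)$ is a single $t$-independent map along $\gamma$.

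The key steps, in order, are as follows. First I would invoke Theorem~\ref{theorem::c} to identify $T$, restricted to the geodesic $\gamma$ at each fixed $t$, with a Jacobi field $J_t(s)$ along $\gamma$; this is exactly the statement that $s\mapsto H(t,s)=\varphi_t(\gamma(s))$ is a variation of $\gamma$ through curves whose transverse field satisfies the Jacobi equation. Second I would use the fact that $\varphi_0=\mathrm{id}$ together with the $t$-independence of the pre-calculated transport (Remark~\ref{remark::preint}) to set up an ODE in $t$ for the differential $(d\varphi_t)$ evaluated along $\gamma$. Because the transport map sending $T$ from $T_{\Phi^t(p_2)}M$ to $T_{\Phi^t(p_1)}M$ does not depend on $t$ and is realized by the Jacobi/parallel structure, the family $t\mapsto\varphi_t$ is governed by a flow whose infinitesimal generator is an affine vector field — a vector field whose flow preserves $\nabla$. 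The standard characterization is that a vector field $X$ generates affine transformations iff $\mathcal{L}_X\nabla=0$, equivalently $\nabla_Y\nabla_Z X-\nabla_{\nabla_YZ}X=\mathcal{R}(X,Y)Z$ for all $Y,Z$, i.e. the second covariant derivative of $X$ is determined pointwise by curvature acting on $X$. This is precisely the Jacobi-field identity, so the third step is to verify that the generator of $\varphi_t$ satisfies this affine-Killing-type equation along $\gamma$, using that its transverse variation fields are Jacobi fields by Theorem~\ref{theorem::c}.

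Concretely, I would argue that since every $\gamma$ in the neighborhood yields a Jacobi field under $\varphi_t$, and since the flow field is self-similar so that the generator $Z_t:=\tfrac{d}{d\tau}\big|_{\tau=t}\varphi_\tau\circ\varphi_t^{-1}$ acts uniformly on all such geodesics, the generator must be an affine (infinitesimal connection-preserving) vector field on the patch. Integrating the affine-generator ODE from $\varphi_0=\mathrm{id}$ then yields that each $\varphi_t$ is a local affine transformation, since the flow of an affine vector field consists of affine transformations and the identity is trivially affine. Finally, because $H(t,0)=\Phi^t(p_2)$ and $H(t,1)=\Phi^t(p_1)$ are the flow lines themselves, the relation $\Phi^t(p)=\varphi_t(p)$ for $p$ in the neighborhood of the base point follows from $H(t,s)=\varphi_t(\gamma(s))$ evaluated along $\gamma$, giving the claimed coincidence of the flow line with an affine transformation.

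The main obstacle I anticipate is rigorously passing from the pointwise-along-$\gamma$ Jacobi-field identification to a genuine infinitesimal affine condition on an open patch, i.e. upgrading ``$\varphi_t$ sends this family of geodesics to Jacobi variations'' to ``the generator $Z_t$ satisfies $\mathcal{L}_{Z_t}\nabla=0$ as a tensor equation.'' The Jacobi equation only constrains the second covariant derivative of $Z_t$ along one geodesic direction at a time; to conclude $\mathcal{L}_{Z_t}\nabla=0$ one needs the parallel-torsion hypothesis and the freedom in choosing $\gamma$ (equivalently, the geodesic direction) within the normal neighborhood to sweep out all tangent directions, so that the one-dimensional Jacobi identities assemble into the full affine-Killing equation. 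Controlling this assembly — and confirming that parallel torsion is exactly what makes the torsion contribution to $\mathcal{L}_{Z_t}\nabla$ vanish identically rather than merely along a single geodesic — is the delicate technical core of the argument.
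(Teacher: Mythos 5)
Your proposal is correct in substance but takes a genuinely different route from the paper's own proof. The paper never touches Jacobi fields or infinitesimal generators: it observes that, by Remark~\ref{remark::preint}, $H(t,s)=\varphi_t\circ\gamma(s)$ with $\varphi_t$ coinciding with the flow on the domain in question, that condition $(*)$ makes every $s$-curve $H(t,\cdot)$ a (parameterized) geodesic, and that the base geodesic $\gamma$ is arbitrary; hence $\varphi_t$ maps arbitrary geodesics to geodesics and is on that basis declared connection-preserving. Your argument is the infinitesimal version of the same idea: you use Theorem~\ref{theorem::c} to see that the frozen-time flow field restricts to a Jacobi field along every geodesic arising in a condition-$(*)$ patch, upgrade this by polarization to the affine-Killing equation $\mathcal{L}_{Z_t}\nabla=0$ for the generator, and integrate, using that flows of infinitesimal affine transformations consist of affine transformations. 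The sweep you need — that for any point $q$, direction $w$, and time $t$ there is a condition-$(*)$ patch whose fixed-$t$ geodesic passes through $q$ with velocity proportional to $w$ — does go through: take $p_2=\Phi^{-t}(q)$ and $p_1=\Phi^{-t}\bigl(\exp_q(w)\bigr)$, so the self-similar patch between these two flow lines has exactly the required geodesic at time $t$. Two remarks. First, the ``delicate technical core'' you flag (assembling one-geodesic Jacobi identities into a full tensor equation, and controlling the torsion contribution) is precisely the step the paper's proof also leaves implicit: a diffeomorphism preserving parameterized geodesics a priori preserves only the symmetric part of $\nabla$, and promoting this to preservation of the full connection is where parallel torsion must enter; so your route is heavier but makes explicit the one leap the published argument glosses over, while the paper's route is shorter and needs neither Theorem~\ref{theorem::c} nor the affine-Killing formalism. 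Second, with the paper's curvature convention the affine-Killing identity should read $\nabla_Y\nabla_Z X-\nabla_{\nabla_Y Z}X=-\mathcal{R}(X,Y)Z$; your stated sign is convention-dependent and should be fixed to match the Jacobi equation $\nabla_E\nabla_E T=-\mathcal{R}(T,E)E$ derived in the proof of Theorem~\ref{theorem::c}.
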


\subsection{Existence of Linear Observed Systems}

So far, we have rigorous descriptive properties of preintegration and exact linearization with state independent Jacobians. Definition of linear observed systems on arbitrary smooth manifolds is made possible by characterization from these two aspects. Preintegration and exact linearization with state-estimate-independent Jacobians are two equivalent concepts, similar to the case when the state space is a Lie group, as discussed in \cite{LoS}.
\begin{theorem}\label{theorem::el_is_preint}
  A system with local vector-error representation $(\Phi^t,\mathfrak{er},u_t)$ on a manifold $(M,\nabla)$ with parallel torsion possesses exact linearization with state-estimate-independent Jacobian property if and only if it's preintegrable. 
\end{theorem}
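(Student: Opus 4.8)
The plan is to prove the two implications separately, leaning on Theorem~\ref{theorem::preint_property} together with the two standard consequences of connection-preservation: an affine map carries geodesics to geodesics and commutes with the exponential, $\varphi(\exp_p(v))=\exp_{\varphi(p)}((d\varphi)_p(v))$, and it intertwines parallel transport, $(d\varphi)\circ\mathcal{P}=\mathcal{P}\circ(d\varphi)$ along corresponding curves. Both identities follow directly from the definition of affine transformation given in Section~\ref{sec::pre} and will be the workhorses of the argument.

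For the direction preintegrable $\Rightarrow$ exact linearization, I would first invoke Theorem~\ref{theorem::preint_property} to write the flow locally as an affine transformation $\Phi^t=\varphi_t$ and set $\psi^{u_t}:=\varphi_t$. Applying the affine--exponential commutation at $p=\exp_{\hat p}(v)$ yields $\Phi^t(p)=\varphi_t(\exp_{\hat p}(v))=\exp_{\Phi^t(\hat p)}((d\varphi_t)_{\hat p}(v))$, which is exactly condition~(1) of Definition~\ref{definition::el}. For condition~(2), I would set $\bs{F}_t:=(d\varphi_t)_{\hat p}$ and differentiate the flow relation $\tfrac{d}{dt}\varphi_t=W^{u_t}\circ\varphi_t$; since the push-forward and the $t$-derivative commute, $\tfrac{d}{dt}\bs{F}_t$ equals the covariant differential of $W^{u_t}$ at $\Phi^t(\hat p)$ composed with $\bs{F}_t$, which defines $\bs{A}_{u_t}$. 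The crucial point is that the pre-calculated transformation of Remark~\ref{remark::preint} is independent of the starting point on the flow line, so the endomorphism $\bs{A}_{u_t}$ produced this way cannot depend on $\hat p$ and hence depends only on $u_t$.

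For the converse, exact linearization $\Rightarrow$ preintegrable, I would first note that condition~(1) forces $\Phi^t$ to coincide locally with the affine transformation $\psi^{u_t}$: comparing $\Phi^t(\exp_{\hat p}(v))$ with $\psi^{u_t}(\exp_{\hat p}(v))$ through the commutation identity shows they agree once $\psi^{u_t}(\hat p)=\Phi^t(\hat p)$. Given neighboring flow lines $\Phi^t(p_1),\Phi^t(p_2)$ and the geodesic $\gamma(s)$ joining $p_2$ to $p_1$, I would build the patch $H(t,s):=\psi^{u_t}(\gamma(s))$; because $\psi^{u_t}$ is affine it sends $\gamma$ to a geodesic and each $t\mapsto H(t,s)$ is the flow line from $\gamma(s)$, so condition~$(*)$ and self-similarity both hold. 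Theorem~\ref{theorem::c} (where parallel torsion is essential) then identifies $T=\partial_tH$ with a Jacobi field along the geodesic $s\mapsto H(t,s)$, and the vector transport solving its initial-value problem is governed by $\bs{F}_t$; since $\bs{A}_{u_t}$ depends only on $u_t$, this transport is the same pre-calculated map for every starting point, which is precisely preintegrability.

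The step I expect to be the main obstacle is the equivalence of the two state-independence notions: translating ``$\bs{A}_{u_t}$ depends only on $u_t$'' into ``the Jacobi-field vector transport is a pre-calculated, starting-point-independent transformation,'' and conversely. This requires carefully matching the $t$-direction ODE $\tfrac{d}{dt}\bs{F}_t=\bs{A}_{u_t}\bs{F}_t$ against the $s$-direction Jacobi equation along the connecting geodesic, using that affine maps commute with parallel transport to transfer normal frames between $T_{\Phi^t(p_2)}M$ and $T_{\Phi^t(p_1)}M$, and relying on parallel torsion so that Theorem~\ref{theorem::c} remains applicable across the whole patch.
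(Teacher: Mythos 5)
Your proposal is correct and follows essentially the same route as the paper: both arguments hinge on the commutation identity $\psi(\exp_{\hat p}(v))=\exp_{\psi(\hat p)}\bigl((d\psi)_{\hat p}(v)\bigr)$ for affine transformations together with Theorem~\ref{theorem::preint_property}, which identifies each property with the flow coinciding locally with a connection-preserving map. The paper's own proof is in fact far terser---it states only this identity and cites Theorem~\ref{theorem::preint_property}---so your explicit two-directional verification (the patch construction $H(t,s)=\psi^{u_t}(\gamma(s))$, the ODE for $\bs{F}_t$, and the matching of the two state-independence notions, which the paper glosses over entirely) fills in details the paper leaves implicit rather than departing from its method.
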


\begin{definition}[Linear Observed Systems]
  A system on a complete manifold $(M,\nabla)$ is \textbf{linear observed} if any local vector-error representation is preintegrable or possesses exact linearization property.
\end{definition}

At a first glance, existence of linear observed systems is simply the existence of affine transformation groups on $(M,\nabla)$. However, this is not enough, since the property of state-independence in both exact linearization and preintegrability requires the corresponding local affine transformation $\psi$ is capable of being reconstructed from a single linear isomorphism between a pair of points in $M$. Consider the linear isomorphism $\bs{F}_t:T_{\hat p}M\rightarrow T_{\Phi^t(\hat p)}M$ in definition~\ref{definition::el}. Establishment of linear observed property requires the ability of $\bs{F}_t$ being uniquely extended to a local affine transformation on an open set containing $\hat p$, i.e. there exists a unique local affine transformation $\psi$, such that $d\psi$ at $T_{\hat p} M$ is exactly $\bs{F}_t$. This has an effect of posing restrictions on the state space $M$ itself beyond requirements of the specific forms that the dynamics should obey.
\begin{assumption}
  The state manifold $M$ is real-analytic.
\end{assumption}
\begin{theorem}\label{theorem::d}
  On a complete real-analytic manifold $(M,\nabla)$ with parallel torsion, there exists linear observed systems if and only if the linear isomorphisms $\bs{F}_t$ in definition~\ref{definition::el} maps the tensor $\nabla^m\mathcal{R}$ into $\nabla^m{R^\prime},\forall m\in\mathbb{N}$, where $(\mathcal{R},\mathcal{R}^\prime)$ are $\bs{F}_t$-related curvature tensors.
\end{theorem}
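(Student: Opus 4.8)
The plan is to recast the existence question as a problem of extending a single linear isomorphism to an affine map, and then to resolve that problem with the analytic rigidity of affine connections. By the discussion preceding the statement together with Theorem~\ref{theorem::el_is_preint}, a linear observed system exists on $(M,\nabla)$ precisely when the isomorphism $\bs{F}_t:T_{\hat p}M\to T_{\Phi^t(\hat p)}M$ of Definition~\ref{definition::el} extends to a \emph{unique} local affine transformation $\psi$ with $(d\psi)_{\hat p}=\bs{F}_t$. Hence the entire theorem reduces to a clean geometric claim: such an affine extension exists if and only if $\bs{F}_t$ carries each curvature jet $(\nabla^m\mathcal{R})_{\hat p}$ to the corresponding jet $(\nabla^m\mathcal{R}^\prime)_{\Phi^t(\hat p)}$ at the image point, for every $m\in\mathbb{N}$.

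For necessity I would argue straight from the definition of a connection-preserving map. If $\bs{F}_t=(d\psi)_{\hat p}$ for a local affine $\psi$, then $\psi^*\nabla=\nabla$, and since $\mathcal{R}$ is assembled functorially from $\nabla$ and the Lie bracket one has $\psi^*\mathcal{R}=\mathcal{R}$; an induction using $\psi^*(\nabla S)=\nabla(\psi^*S)$ then gives $\psi^*(\nabla^m\mathcal{R})=\nabla^m\mathcal{R}$ for all $m$. Evaluating at $\hat p$ and reading off the action on tangent vectors shows $\bs{F}_t$ pushes $(\nabla^m\mathcal{R})_{\hat p}$ to $(\nabla^m\mathcal{R})_{\Phi^t(\hat p)}$, which is exactly the asserted $\bs{F}_t$-relatedness. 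The standing parallel-torsion hypothesis is what makes the curvature jets the \emph{only} invariants in play: since $\nabla^k\tau=0$ for every $k\ge1$, the torsion contributes no independent higher-order data, and only its order-zero value $\tau$ remains, which an affine map likewise preserves.

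The substantial direction is sufficiency, and integrating the jet data into an honest affine map is the main obstacle. Here I would lean on the Assumption of real-analyticity. In a normal coordinate chart centered at $\hat p$ the Christoffel symbols vanish at the origin and their full Taylor expansion is built from universal polynomial expressions in $\tau$ and the jets $\{(\nabla^m\mathcal{R})_{\hat p}\}_{m\ge0}$; analyticity guarantees this series converges and determines $\nabla$ on a whole normal neighborhood. The natural candidate extension is $\psi:=\exp_{\Phi^t(\hat p)}\circ\,\bs{F}_t\circ\exp_{\hat p}^{-1}$, a diffeomorphism on such a neighborhood by completeness and exactly the map appearing in Definition~\ref{definition::el}. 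Because $\bs{F}_t$ matches every curvature jet by hypothesis (and $\tau$ at order zero), the connection $\psi^*\nabla$ has the identical normal-coordinate Taylor expansion as $\nabla$, so analyticity forces $\psi^*\nabla=\nabla$ and $\psi$ is affine. Uniqueness is the rigidity of affine maps—a connection-preserving transformation is pinned down by its value and differential at one point—so the extension of $\bs{F}_t$ is unique. This is precisely the analytic affine-extension theorem of \cite{DiffGeo}; the only genuine labor is verifying that the parallel-torsion reduction aligns the hypotheses of that theorem with the stated curvature-jet condition.
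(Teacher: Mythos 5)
Your proposal is correct and takes essentially the same route as the paper: it reduces the existence of linear observed systems to the unique extendability of $\bs{F}_t$ to a local affine transformation (exactly the reduction made in the discussion preceding the theorem) and then settles that extension problem by the analytic affine-extension theorem of \cite{DiffGeo} (Theorem 7.2, Chapter VI), with parallel torsion killing the higher-order torsion jets. The only difference is one of exposition, not of substance: you additionally sketch the internals of that extension theorem (necessity via the pullback identity $\psi^*(\nabla^m\mathcal{R})=\nabla^m\mathcal{R}$, sufficiency via the normal-coordinate Taylor expansion and analyticity), where the paper simply cites the result.
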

\begin{corollary}\label{corollary::parallel_zero_curvature}
  On a complete analytic manifold $(M,\nabla)$ with parallel torsion, there exists linear observed systems if the curvature tensor $\mathcal{R}$ is parallel $\nabla\mathcal{R}=0$ or zero $\mathcal{R}=0$.
\end{corollary}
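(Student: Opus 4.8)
The plan is to reduce the corollary to the characterization provided by Theorem~\ref{theorem::d} and then to exhibit a concrete family of linear isomorphisms $\bs{F}_t$ satisfying its tensorial condition. By that theorem, on a complete real-analytic manifold with parallel torsion, linear observed systems exist as soon as one can produce linear maps $\bs{F}_t:T_{\hat p}M\rightarrow T_{\Phi^t(\hat p)}M$ of the form in Definition~\ref{definition::el} that send $(\nabla^m\mathcal{R})_{\hat p}$ to $(\nabla^m\mathcal{R})_{\Phi^t(\hat p)}$ for every $m\in\mathbb{N}$. First I would take $\bs{F}_t$ to be the parallel transport $\mathcal{P}$ along the estimated flow line $t\mapsto\Phi^t(\hat p)$, observing that parallel transport obeys a linear ODE of the required form $\frac{d}{dt}\bs{F}_t=\bs{A}_{u_t}\bs{F}_t$ and, crucially, leaves every covariantly constant tensor invariant.

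Next I would dispatch the two hypotheses separately. In the flat case $\mathcal{R}=0$, every iterated covariant derivative vanishes, $\nabla^m\mathcal{R}=0$ for all $m$, so any linear isomorphism — in particular $\bs{F}_t$ — trivially carries the zero tensor at $\hat p$ to the zero tensor at $\Phi^t(\hat p)$, and the condition of Theorem~\ref{theorem::d} holds with nothing to check. In the parallel case $\nabla\mathcal{R}=0$, an immediate induction gives $\nabla^m\mathcal{R}=\nabla(\nabla^{m-1}\mathcal{R})=0$ for every $m\ge 1$, so all higher-order conditions reduce to the flat-case argument; only the $m=0$ equation $\bs{F}_t(\mathcal{R}_{\hat p})=\mathcal{R}_{\Phi^t(\hat p)}$ remains, and this is exactly the statement that parallel transport preserves the parallel tensor $\mathcal{R}$, which holds precisely because $\nabla\mathcal{R}=0$ characterizes invariance of $\mathcal{R}$ under $\mathcal{P}$ along any curve.

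With the tensorial condition verified, I would close by invoking Theorem~\ref{theorem::d}: the exhibited $\bs{F}_t$ extends (uniquely, by real-analyticity) to a local affine transformation of $(M,\nabla)$, and the associated vector-error representation is preintegrable and possesses exact linearization, i.e. it is linear observed. I expect the main obstacle to be the $m=0$ step in the parallel-curvature case — checking that parallel transport along the flow line genuinely realizes an admissible $\bs{F}_t$ of Definition~\ref{definition::el} (state-estimate independence, together with the ODE form $\frac{d}{dt}\bs{F}_t=\bs{A}_{u_t}\bs{F}_t$) while simultaneously preserving $\mathcal{R}$. The parallel-torsion hypothesis is what disposes of the analogous torsion conditions hidden inside Theorem~\ref{theorem::d}, since $\nabla^m\tau=0$ for $m\ge 1$ and $\mathcal{P}$ fixes the parallel tensor $\tau$, so no separate argument for $\tau$ is needed.
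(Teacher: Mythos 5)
Your proposal is correct and takes essentially the same approach as the paper: reduce to Theorem~\ref{theorem::d} and verify its tensorial hypothesis under $\mathcal{R}=0$ or $\nabla\mathcal{R}=0$ together with parallel torsion. Where the paper compresses the verification into ``$\nabla\mathcal{R}=0$ or $\mathcal{R}=0$ clearly satisfies the condition,'' you make the key detail explicit --- taking $\bs{F}_t$ to be parallel transport, which fixes the parallel tensors $\mathcal{R}$ and $\tau$ and trivially disposes of the vanishing higher covariant derivatives --- which is exactly what the paper's sentence tacitly relies on, since in the $\nabla\mathcal{R}=0$ case an arbitrary linear isomorphism would not satisfy the $m=0$ condition.
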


\subsection{$\mathcal{R}$=0 Classifies Linear Observed Systems on Groups}

Flatness has non-trivial consequences on the structure of the state space $M$ under mild topological assumptions. The concepts of linear observed property and flatness are local, and thus there must be additional topological conditions to get global results.

\begin{assumption}\label{assumption::c}
  The state space $M$ is simply-connected, i.e. every loop in $M$ is homotopic with a constant loop.
\end{assumption}

Though some state manifolds of our interest are not eligible to be simply-connected (e.g. ${\rm SO}(3)$), it's always possible to construct a corresponding simply-connected covering space (e.g. ${\rm SU}(2)$ as a double cover of ${\rm SO}(3)$) and lift the dynamics to such covering space. In general, there always exists a simply-connected covering space $\tilde M$ for connected state manifold $M$. The flow $\Phi^t$ of a system on $M$, as a local one-parameter group of local diffeomorphisms, locally induces a lifted object $\pi^{-1}\circ\Phi^t\circ\pi$ on $\tilde M$, easily seen to be a flow on $\tilde M$ where $\pi:\tilde M\rightarrow M$ is the projection map of the lift as a local diffeomorphism. This implies the assumption of simply-connectedness is not considered a limitation.

\begin{theorem}\label{theorem::e}
  There exists linear observed systems on a complete simply-connected manifold $(M,\nabla)$ whose torsion is parallel and curvature is zero. The manifold moreover admits a Lie group structure whose left multiplication is a parallel transport, and the corresponding $\nabla$ is the Cartan (+)-connection. The local affine transformation encountered in the definition of linear observed systems can be extended to all of $M$ and is a group-automorphism under the multiplication induced by $\nabla$.
\end{theorem}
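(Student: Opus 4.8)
The plan is to recover a global Lie group structure from the purely infinitesimal data $(\mathcal{R}=0,\ \nabla\tau=0)$ and then read off the connection and the affine transformations. Existence of linear observed systems is immediate from Corollary~\ref{corollary::parallel_zero_curvature} (the $\mathcal{R}=0$ branch), so the substance lies in the structural claims. First I would use flatness together with Assumption~\ref{assumption::c}: since $\mathcal{R}=0$ the holonomy of $\nabla$ is locally trivial, and simple connectivity upgrades this to globally path-independent parallel transport. Fixing a base point $e\in M$ (the future identity) and a basis of $T_eM$, and parallel-transporting it over all of $M$, yields a global smooth frame of parallel vector fields $E_1,\dots,E_n$ with $\nabla E_i=0$.

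Next I would extract a Lie algebra. From the torsion definition, $[E_i,E_j]=\nabla_{E_i}E_j-\nabla_{E_j}E_i-\tau(E_i,E_j)=-\tau(E_i,E_j)$, and because $\nabla\tau=0$ while the $E_i$ are parallel, the components of $\tau$ in this frame are constants $c^k_{ij}$; hence $[E_i,E_j]=-\sum_k c^k_{ij}E_k$ with constant structure functions. The Jacobi identity for the vector-field bracket, which holds automatically, then forces $\{c^k_{ij}\}$ to be the structure constants of an abstract $n$-dimensional Lie algebra $\mathfrak{g}$. A useful observation for the sequel is that the integral curve of any constant-coefficient parallel field $\sum_i a^iE_i$ is a geodesic, since along it $\nabla_{\dot\gamma}\dot\gamma=\sum_i a^i\nabla_{\dot\gamma}E_i=0$; geodesic completeness of $(M,\nabla)$ therefore makes every parallel vector field complete.

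The core step is integration. Let $\theta=(\theta^1,\dots,\theta^n)$ be the $\mathfrak{g}$-valued coframe dual to $(E_i)$; constancy of the structure functions is exactly the Maurer--Cartan equation $d\theta+\tfrac12[\theta,\theta]=0$. Let $\tilde G$ be the simply-connected Lie group integrating $\mathfrak{g}$ (Lie's third theorem), with Maurer--Cartan form $\omega$. By the Cartan--Darboux reconstruction theorem for $\mathfrak{g}$-valued one-forms on the simply-connected $M$, there is a smooth $\Psi:M\to\tilde G$ with $\Psi^{*}\omega=\theta$, unique up to left translation; since $\theta$ is a pointwise isomorphism, $\Psi$ is a local diffeomorphism carrying the complete frame $(E_i)$ to the complete left-invariant frame of $\tilde G$. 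Completeness of both parallelisms then gives the path-lifting property, so $\Psi$ is a covering map, and as both $M$ and $\tilde G$ are simply-connected it is a diffeomorphism. Transporting the group law of $\tilde G$ through $\Psi$ endows $M$ with a Lie group structure in which the $E_i$ are left-invariant; in particular left translation coincides with parallel transport, $(dL_g)_e=\mathcal{P}_e^g$. I expect this integration step---upgrading the local developing map to a global diffeomorphism via completeness and monodromy---to be the main obstacle.

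It remains to identify $\nabla$ and the affine transformations. Since $\nabla E_i=0$ and the $E_i$ are now the left-invariant fields, $\nabla$ is the canonical connection for which left-invariant fields are parallel, i.e. the Cartan (+)-connection, whose torsion on left-invariant fields is $\tau(X,Y)=-[X,Y]$, matching the computation above. For the transformations I would invoke the standard extension principle that a local affine map of a complete simply-connected $(M,\nabla)$ extends uniquely to a global affine transformation; applying it to the local affine $\psi^{u_t}$ of Definition~\ref{definition::el} (equivalently the $\varphi_t$ of Theorem~\ref{theorem::preint_property}) produces a globally defined affine transformation of $M$. Finally, the affine group of a Cartan-connected Lie group is $G\rtimes\mathrm{Aut}(G)$: any affine map fixing the identity preserves the torsion and hence the bracket at $e$, so its differential lies in $\mathrm{Aut}(\mathfrak{g})$ and it equals $\exp\circ df_e\circ\exp^{-1}$, the unique group automorphism integrating it. Thus the state-independent Jacobian $\bs{F}_t$ corresponds to a Lie-algebra automorphism and the extended transformation, normalized at the identity, is a genuine group automorphism, which is precisely the group-affine characterization of \cite{LoS} and confirms that the $\mathcal{R}=0$ case reproduces linear observed systems on groups.
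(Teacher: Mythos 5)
Your proposal is correct in substance, but it takes a genuinely more self-contained route than the paper. The paper settles the structural claims by citation: existence comes from Corollary~\ref{corollary::parallel_zero_curvature} (exactly your first step), the Lie group structure is Lemma~\ref{lemma::c}, quoted as a black box from Hicks \cite{TheoConn}, and the global extension of the local affine maps is deduced from analyticity of Lie groups together with simple-connectedness. What you do differently is prove Hicks's theorem from scratch: the global parallel frame from flatness plus Assumption~\ref{assumption::c}, constant structure functions from $\nabla\tau=0$, the Maurer--Cartan/Cartan--Darboux developing map $\Psi:M\to\tilde G$, and the completeness-plus-monodromy argument upgrading $\Psi$ to a diffeomorphism. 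This buys a transparent, verifiable construction where the paper offers only a reference, and your handling of the automorphism claim is actually more careful than the paper's: the paper argues that the affine transformation ``preserves the multiplication induced by $\nabla$,'' which as stated would equally apply to left translations (affine but not automorphisms); your identification of the affine group as $G\rtimes\mathrm{Aut}(G)$ and the normalization at the identity is what makes the claim genuinely correct, and it is what Corollary~\ref{corollary::b} needs.

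Two caveats. First, the ``standard extension principle'' you invoke --- that a local affine map of a complete simply-connected $(M,\nabla)$ extends globally --- is false for merely smooth connections; the correct statement (Chapter~VI of \cite{DiffGeo}) requires the manifold and connection to be real-analytic. Your use is sound only because you apply it after identifying $M$ with a Lie group, whose canonical connection is analytic in the group's analytic structure; this is precisely the point the paper makes (``since Lie group is analytic''), and you should state it explicitly rather than cite a principle that, as phrased, does not hold. Second, a labelling tension: the connection whose parallel fields are the left-invariant ones has $\nabla_{\tilde X}\tilde Y=0$ and torsion $-[\tilde X,\tilde Y]$, which in the paper's own Remark is the $\mu=0$ case rather than $\mu=1$; since the theorem statement itself pairs ``left multiplication is parallel transport'' with the $(+)$-label, this inconsistency is inherited from the paper and is not a gap in your argument, but your sign bookkeeping should flag it.
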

\begin{remark}
  Let $\tilde X,\tilde Y,\tilde Z$ be left-invariant vector fields on a Lie group $G$. A family of connections $\nabla^\mu$ parameterized by $\mu\in[0,1]$, whose geodesics coincide with the one-parameter groups generated by the corresponding Lie algebra, have such properties: (a) $\nabla^\mu_{\tilde X}{\tilde Y}=\mu[\tilde X,\tilde Y]$ (b) $\mathcal{R}(\tilde X,\tilde Y)\tilde Z=\mu(\mu-1)[[\tilde X,\tilde Y],\tilde Z]$ (c) $\tau(\tilde X,\tilde Y)=(2\mu-1)[\tilde X,\tilde Y]$. We herein consider the (+)-connection with $\mu=1$. 
\end{remark}

As an application of theorem~\ref{theorem::e}, we reproduce the characterization of linear observed systems on Lie groups (group affine systems) proposed in the work of Bonnabel\cite{InEKF,LoS} from the perspective of the Cartan (+)-connection, while the motivation in \cite{LoS} is error autonomy.

\begin{corollary}\label{corollary::b}
  Given state space as a Lie group $G$, consider a linear observed system whose local vector-error representation is $(\Phi^t,\mathfrak{er},u_t)$, then the flow lines satisfy:
	\begin{equation*}
	  \Phi^t(g\cdot h)=\psi^t(g)\cdot\Phi^t(h),\quad\forall g,h\in G
	\end{equation*}
	where $\psi^t:G\rightarrow G$ must be a group-automorphism on $G$.
\end{corollary}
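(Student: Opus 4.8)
\textit{Proof plan.} The plan is to obtain the corollary from Theorem~\ref{theorem::e} in two moves: first show that the flow map itself is a \emph{global} affine transformation of $(G,\nabla^+)$, and then classify all such transformations. Equip $G$ with the Cartan (+)-connection ($\mu=1$ in the remark), for which the curvature vanishes and the torsion $\tau=[\cdot,\cdot]$ is left-invariant, hence parallel; thus $G$ meets the hypotheses of Theorem~\ref{theorem::e} (after passing to the universal cover if $G$ is not simply-connected, as in the discussion below Assumption~\ref{assumption::c}, and descending the conclusion along the covering projection). Since the system is linear observed it is preintegrable by Theorem~\ref{theorem::el_is_preint}, so Theorem~\ref{theorem::preint_property} yields that each flow line $\Phi^t$ agrees, on the relevant normal neighborhood, with a local affine transformation $\varphi_t$. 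Completeness together with zero curvature and simple-connectedness then lets me extend $\varphi_t$ to a globally defined affine map, exactly as asserted in Theorem~\ref{theorem::e}; identifying $\Phi^t$ with it is the first milestone: $\Phi^t\in\mathrm{Aff}(G,\nabla^+)$ for every $t$.

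Next I would establish the structure of $\mathrm{Aff}(G,\nabla^+)$. By Theorem~\ref{theorem::e} left multiplication is the parallel transport of $\nabla^+$, so every left translation $L_g$ is affine, and hence $\Psi^t:=L_{\Phi^t(e)^{-1}}\circ\Phi^t$ is affine and fixes the identity $e$. On a connected real-analytic manifold an affine transformation is uniquely determined by its value and differential at a single point, so $\Psi^t$ is pinned down by $A:=(d\Psi^t)_e$. Because affine maps preserve torsion and the torsion of $\nabla^+$ at $e$ equals, under $T_eG\cong\mathfrak{g}$, the Lie bracket, $A$ must be a Lie-algebra automorphism; conversely any such $A$ integrates to a unique Lie-group automorphism of the simply-connected $G$, and group automorphisms preserve left-invariant fields and their brackets, hence preserve $\nabla^+$ and are affine. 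By the uniqueness statement the two coincide, giving the factorization $\Phi^t=L_{\Phi^t(e)}\circ\Psi^t$ with $\Psi^t\in\mathrm{Aut}(G)$.

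Finally I would read off the corollary. Setting $\psi^t(g):=\Phi^t(g)\cdot\Phi^t(e)^{-1}$ and substituting $\Phi^t(g)=\Phi^t(e)\cdot\Psi^t(g)$ gives $\psi^t=c_{\Phi^t(e)}\circ\Psi^t$, the composition of the inner automorphism $c_{\Phi^t(e)}$ with $\Psi^t$, whence $\psi^t\in\mathrm{Aut}(G)$; bijectivity is automatic since $\Phi^t$ and $L_{\Phi^t(e)^{-1}}$ are diffeomorphisms. A one-line computation then checks $\psi^t(g)\cdot\Phi^t(h)=\Phi^t(e)\Psi^t(g)\Psi^t(h)=\Phi^t(e)\Psi^t(gh)=\Phi^t(gh)$, which is the claimed relation.

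I expect the main obstacle to be the classification in the second step, specifically the identification of the identity-fixing affine transformations with the Lie-group automorphisms. This hinges on (i) the uniqueness of an affine map from its value-and-differential at one point, which is precisely where real-analyticity and simple-connectedness become indispensable, and (ii) translating ``preserves the torsion tensor at $e$'' into ``is a Lie-algebra automorphism'' and integrating it to a group automorphism. Handling a non-simply-connected $G$ (e.g.\ ${\rm SO}(3)$) by lifting to the universal cover and pushing the relation back down through the covering homomorphism is the accompanying technical nuisance rather than a conceptual difficulty.
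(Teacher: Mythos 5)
Your proposal is correct, and it reaches the corollary by the same top-level route as the paper --- through Theorem~\ref{theorem::e} and the flat Cartan (+)-connection --- but the decisive step is argued quite differently, and more completely. The paper's own proof is a one-line appeal to Theorem~\ref{theorem::e}, whose proof in turn asserts that the globally extended affine transformation ``preserves $\nabla$, and thus preserves the multiplication induced by $\nabla$, making it an automorphism.'' Taken literally this cannot be the whole story: left translations also preserve $\nabla^+$ yet are not automorphisms (they do not fix $e$), and the flow $\Phi^t$ itself --- which by Theorems~\ref{theorem::el_is_preint} and~\ref{theorem::preint_property} coincides with a (local, then globally extended) affine map --- is likewise not an automorphism in general. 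Your proof supplies exactly the missing bookkeeping: you classify $\mathrm{Aff}(G,\nabla^+)$ by splitting off the left-translation part, showing $\Psi^t:=L_{\Phi^t(e)^{-1}}\circ\Phi^t$ is an identity-fixing affine map, and then identifying such maps with $\mathrm{Aut}(G)$ via the rigidity of affine transformations (they are determined by their value and differential at one point), the fact that the torsion of $\nabla^+$ at $e$ is the Lie bracket (so $(d\Psi^t)_e\in\mathrm{Aut}(\mathfrak{g})$), and the integration $\mathrm{Aut}(\mathfrak{g})\cong\mathrm{Aut}(G)$ on a simply-connected group. The explicit factorization $\Phi^t=L_{\Phi^t(e)}\circ\Psi^t$, together with $\psi^t(g)=\Phi^t(g)\cdot\Phi^t(e)^{-1}=c_{\Phi^t(e)}\circ\Psi^t(g)$ and the final one-line verification of the displayed identity, pins down which map in the corollary is the automorphism --- something the paper never writes out. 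What the paper's terser route buys is brevity; what yours buys is a proof that actually closes the gap between ``affine'' and ``automorphism,'' at the cost of the covering-space descent for non-simply-connected $G$ that you flag (and that the paper, which only proves Theorem~\ref{theorem::e} under Assumption~\ref{assumption::c}, leaves equally implicit).
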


\subsection{Discussions}

Readers may wonder if the connection encountered in the previous subsections is compatible with a certain metric on the manifold. This fails when the metric on Lie group $G$ is not bi-invariant. It's equivalent to requiring the group being isomorphic to the direct product of a compact group and some $\mathbb{R}^m$\cite{DGLG}, which is not true for ${\rm SE}(3)$ or ${\rm SE}_2(3)$ in robotics. Moreover, the torsion tensor on left-invariant vector fields coincides with the Lie bracket under the circumstance of theorem~\ref{theorem::e}, being non-zero if the group is non-abelian. 

\section{An Example from Navigation}\label{sec::example}

An example of the attitude indicator on aircrafts, i.e. partial attitude estimator by IMU, serves to unfold the mechanisms of all the above theorems. Denote $\bs{R}\in\text{SO}(3)$ to be the rotation of IMU from its body frame to the world frame. The process ODE driven by the angular velocity $\bs{\omega}\in\mathbb{R}^3$ and the algebraic measurement are modelled in (\ref{eq::full_state_ahrs}),
\begin{equation}
  \label{eq::full_state_ahrs}
  \dot{\bs{R}}=\bs{R}\bs{\omega}^\times,\quad \bs{a}=\bs{R}^{\rm T}\bs{g}
\end{equation}
where $(\cdot)^\times$ embeds an $\mathbb{R}^3$ into skew-symmetric $\mathbb{R}^{3\times 3}$ matrix. $\bs{a},\bs{g}$ are accelerometer observation and gravity respectively. Since yaw (rotation along gravity) forms an unobservable subspace $\text{SO}(2)$, we eliminate the yaw unobservability by estimating on the quotient space $\text{SO}(3)/\text{SO}(2)$ diffeomorphic to $S^2$ encoding the remaining roll and pitch. Specifically, let $\bs{q}=\bs{R}^{\rm T}\bs{g}\in S^2\subset\mathbb{R}^3$ be the new state of our estimator, and the fully-observable system equations are now (\ref{eq::partial_state_ahrs}).
\begin{equation}
  \label{eq::partial_state_ahrs}
  \dot{\bs{q}}=-\bs{\omega}^\times\bs{q},\quad \bs{a}=\bs{q}
\end{equation}
Note the state evolves strictly on $S^2$, canonically embedded in the ambient Euclidean space $\mathbb{R}^3$ to avoid parameterization by local coordinates. The system is strictly linear from its embedding in $\mathbb{R}^3$, but investigating its linear observed property intrinsically on the sphere is non-trivial. We intentionally choose this non-Lie-group-type state to reveal the mechanisms of our theory to generalize linear observed property beyond systems on Lie groups and distinguish our work from existing attitude estimators on groups.

Under the usual round connection $\nabla$, there exists linear observed systems on $S^2$ by corollary~\ref{corollary::parallel_zero_curvature} due to its constant curvature nature implying $\nabla\mathcal{R}_{S^2}=0$.
\begin{theorem}\label{theorem::example_a}
  The system (\ref{eq::partial_state_ahrs}) on $S^2$ is linear observed under $\nabla$ induced by the canonical embedding of $S^2$ into $\mathbb{R}^3$. Moreover, all linear observed systems on $S^2$ with such $\nabla$ are classified into the form of $\dot{\bs{q}}=\bs{\sigma}_{u_t}^\times\bs{q}$, where $\bs{q}\in S^2\subset\mathbb{R}^3$ and $\bs{\sigma}_{u_t}:V\rightarrow\mathbb{R}^3$ is an arbitrarily-chosen function of the control input $u_t\in V$.
\end{theorem}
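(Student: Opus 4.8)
The plan is to reduce the classification to a computation of the affine transformation group of $(S^2,\nabla)$ and then invoke the existence theory already developed. First I would record that the round connection is the Levi-Civita connection of the constant-curvature metric on $S^2$: it is complete, torsion-free (so its torsion is trivially parallel), and has parallel curvature $\nabla\mathcal{R}=0$ because $S^2$ is a symmetric space; moreover $S^2$ is simply-connected and real-analytic. Thus all hypotheses of Corollary~\ref{corollary::parallel_zero_curvature} and Theorem~\ref{theorem::d} are met, and by Theorem~\ref{theorem::preint_property} the flow map $\Phi^t$ of any linear observed system coincides locally with a local affine transformation $\varphi_t$ of $(S^2,\nabla)$.

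Next I would identify ${\rm Aff}(S^2,\nabla)$ explicitly. An affine transformation $f$ maps geodesics to geodesics, so $f\circ\exp_p=\exp_{f(p)}\circ(df)_p$; by completeness this determines $f$ from its $1$-jet at a single point, and since $f$ preserves $\nabla$ it preserves the curvature tensor, so $L:=(df)_p$ must intertwine $\mathcal{R}_p$ and $\mathcal{R}_{f(p)}$. Writing the constant-curvature operator $\mathcal{R}(X,Y)Z=k\bigl(g(Y,Z)X-g(X,Z)Y\bigr)$, the relation $L\,\mathcal{R}_p(X,Y)Z=\mathcal{R}_{f(p)}(LX,LY)(LZ)$ reduces, because $k\neq 0$, to $g(Y,Z)=g(LY,LZ)$ for all $Y,Z$, i.e. $L$ is a linear isometry. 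Hence every affine transformation is a Riemannian isometry, so ${\rm Aff}(S^2,\nabla)={\rm Isom}(S^2)={\rm O}(3)$ acting through the standard embedding in $\mathbb{R}^3$, with identity component ${\rm SO}(3)$. Because $S^2$ is simply-connected, complete and real-analytic, the a priori only local $\varphi_t$ extends uniquely to a global affine transformation, and being connected to the identity it lies in ${\rm SO}(3)$; thus $\Phi^t(\bs{q})=\bs{R}_t\bs{q}$ for a smooth curve $\bs{R}_t\in{\rm SO}(3)$.

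With the flow pinned down as a rotation, the classification follows by differentiation: $W^{u_t}(\bs{q})=\frac{d}{dt}\big|_{0}\bs{R}_t\bs{q}=\dot{\bs{R}}_0\,\bs{q}$, and since $\dot{\bs{R}}_0\in\mathfrak{so}(3)$ is skew-symmetric it equals $\bs{\sigma}^\times$ for a unique $\bs{\sigma}\in\mathbb{R}^3$; letting the input vary yields exactly $\dot{\bs{q}}=\bs{\sigma}_{u_t}^\times\bs{q}$ with $\bs{\sigma}_{u_t}$ an arbitrary smooth function of $u_t$, which specializes to system~(\ref{eq::partial_state_ahrs}) under $\bs{\sigma}_{u_t}=-\bs{\omega}$. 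For the converse I would check any such system is linear observed: its flow $\bs{R}_t\in{\rm SO}(3)$ is an isometry, hence an affine transformation that commutes with $\exp$, so Definition~\ref{definition::el}(1) holds with $\psi^{u_t}=\bs{R}_t$; and since $(d\bs{R}_t)$ preserves $\mathcal{R}$ while $\nabla^m\mathcal{R}=0$ for $m\geq 1$, the hypothesis of Theorem~\ref{theorem::d} is satisfied, giving the state-estimate-independent Jacobian property of Definition~\ref{definition::el}(2).

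I expect the main obstacle to be the geometric core of the middle paragraph: showing that the affine group of the round sphere is no larger than its isometry group. The delicate points are that preserving only the connection (not a priori the metric) nevertheless forces $(df)_p$ to be orthogonal — which is precisely where constant \emph{nonzero} curvature is indispensable, since a flat or Euclidean factor would admit connection-preserving non-isometries such as scalings — and that the merely local affine transformation supplied by Theorem~\ref{theorem::preint_property} genuinely extends to a global rotation, for which the simply-connectedness and real-analyticity of $S^2$ are the essential hypotheses.
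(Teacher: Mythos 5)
Your proposal is correct and follows the same overall route as the paper's proof: identify the affine transformation group of the round sphere with ${\rm O}(3)$, conclude that the flow (equivalently the Jacobian $\bs{F}_t$) evolves in ${\rm O}(3)$ so that $\dot{\bs{F}}_t\bs{F}_t^\top$ lies in $\mathfrak{so}(3)$, and read off the classification $\dot{\bs{q}}=\bs{\sigma}_{u_t}^\times\bs{q}$. The one place where you genuinely go beyond the paper is the middle step: the paper simply asserts that metric compatibility of $\nabla$ makes the affine group coincide with the isometry group, an inference that is invalid in general (on Euclidean space the affine group strictly contains the isometries, e.g. scalings), whereas you derive ${\rm Aff}(S^2,\nabla)\subseteq{\rm Isom}(S^2)$ from the fact that affine maps preserve the curvature tensor combined with the constant-curvature form $\mathcal{R}(X,Y)Z=k\bigl(g(Y,Z)X-g(X,Z)Y\bigr)$ with $k\neq 0$, which forces $(df)_p$ to be orthogonal; this is exactly the argument needed to make the paper's first sentence rigorous. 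You also make explicit two points the paper leaves implicit: the local-to-global extension of the affine transformation supplied by Theorem~\ref{theorem::preint_property} (using completeness, simple-connectedness and analyticity), and the converse verification that every system of the form $\dot{\bs{q}}=\bs{\sigma}_{u_t}^\times\bs{q}$ is indeed linear observed via Corollary~\ref{corollary::parallel_zero_curvature} and Theorem~\ref{theorem::d}. These additions strengthen rather than alter the paper's argument.
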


Decompose the state $\bs{q}$ into a nominal estimate $\hat{\bs{q}}$ and a vector error $\delta\bs{q}\in T_{\hat{\bs{q}}}S^2$ as (\ref{eq::update}), where $\Pi_{\hat{\bs{q}}}:=\bs{I}-\hat{\bs{q}}\hat{\bs{q}}^\top/\Vert\hat{\bs{q}}\Vert^2$.
\begin{equation}
  \label{eq::update}
  \bs{q}=\exp_{\hat{\bs{q}}}(\delta\bs{q})=\hat{\bs{q}}+\Pi_{\hat{\bs{q}}}\delta\bs{q}
\end{equation}
Treating $\bs{a}$ as on $S^2$, output error $\delta\bs{a}\in T_{\hat{\bs{a}}}S^2$ can be defined.
\begin{align}
\label{eq::ekf1}
\dot{\hat{\bs{q}}}&=-\bs{\omega}^\times\hat{\bs{q}},\quad \hat{\bs{a}}=\hat{\bs{q}}\\
\label{eq::ekf2}
\delta\dot{\bs{q}}&=-\bs{\omega}^\times\delta\bs{q},\quad \delta\bs{a}=\delta\bs{q}
\end{align}
Associating noise to (\ref{eq::ekf1}-\ref{eq::ekf2}), a stable EKF-like observer can be designed with nonlinear update (\ref{eq::update}) by \cite{InEKF}.

\section{Conclusions}\label{sec::conclusion}

In this letter, we establish linear observed properties for dynamics evolving on arbitrary smooth manifolds by leveraging the affine connection structure endowed upon the state space. The key results are linear observed properties pose conditions on the curvature of the state space, beyond forcing the flow of the dynamics to be compatible with some local connection-preserving transformations. Specifically, the Cartan flat connection case reconstructs the classification of linear observed systems on Lie groups. We show that our theory is a true generalization of linear observed property in\cite{LoS} by providing a simple example from navigation with non-Lie-group-type state space.

\section*{APPENDIX}

\subsection{Proof of Theorem~\ref{theorem::a}}

\begin{proof}
  On a manifold $(M,\nabla)$, around any point there exists a normal neighborhood $U$ such that any two points in $U$ can be joined by a geodesic which totally lies in $U$. $\Phi^t(p)$ and $\Phi^t(\hat p)$ are all in $U$ for $t\in I$, and thus $\mathfrak{er}$ can be applied to induce the error vector field.
\end{proof}

\subsection{Proof of Theorem~\ref{theorem::b}}

\begin{proof}
    Denote $\Phi^t(\hat p)$ to be the estimated trajectory emanating from $\hat p\in M$ and $\Phi^t(p)$ to be the true trajectory of the state. Let $\bs{e}_t\in T_{\Phi^t(\hat p)}M$ be the error vector at time $t$. By the definition of the retraction, $\exp_{\Phi^t(\hat p)}(\bs{e}_t)=\Phi^t(p)$. Exact linearization at $\Phi^t(\hat p)$ guarantees $\bs{e}_t=(d\psi^{u_t})_{\hat p}(\bs{e}_0)$, where $\bs{e}_0\in T_{\hat p}M$ is the error vector at $\hat p$. Differentiating $\bs{e}_t$ yields $\frac{d}{dt}\bs{e}_t=[\frac{d}{dt}\bs{F}_t]\bs{e}_0=\bs{A}_{u_t}\bs{F}_{u_t}\bs{e}_0=\bs{A}_{u_t}\bs{e}_t$.
\end{proof}

\subsection{Proof of Theorem~\ref{theorem::c}}

\begin{lemma}\label{lemma::a}
  $E$ and $T$ defined in condition $(*)$ commute, i.e. $[E,T]=0$.
\end{lemma}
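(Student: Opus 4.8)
The plan is to recognize that this lemma is purely differential-topological: it concerns only the smooth structure through the Lie bracket and is independent of the connection $\nabla$ (in particular of its torsion, which is only constrained to be parallel later). The key observation is that $E=\partial_s H$ and $T=\partial_t H$ are precisely the images under the differential $dH$ of the standard coordinate vector fields $\partial_s,\partial_t$ on the parameter square $Q=[0,1]\times[0,1]$, namely $E_{H(t,s)}=dH_{(t,s)}(\partial_s)$ and $T_{H(t,s)}=dH_{(t,s)}(\partial_t)$. Thus $E$ and $T$ form the coordinate frame attached to the parametrization $H$, and the trivial identity $[\partial_s,\partial_t]=0$ on $Q$ should transport to $M$.

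First I would verify the claim by testing against arbitrary smooth functions, which sidesteps any appeal to the connection. For $g\in C^\infty(M)$ one has $(Eg)\circ H=\partial_s(g\circ H)$ and $(Tg)\circ H=\partial_t(g\circ H)$, since directional differentiation along $E,T$ pulls back to partial differentiation of $g\circ H$. Consequently
\[
  \big([E,T]\,g\big)\circ H=\partial_s\partial_t(g\circ H)-\partial_t\partial_s(g\circ H)=0,
\]
by the equality of mixed partials (Schwarz's theorem) for the smooth map $g\circ H:Q\to\mathbb{R}$. As $g$ is arbitrary, this forces $[E,T]=0$. Equivalently, since $H$ carries the commuting coordinate fields $\partial_s,\partial_t$ to $E,T$ and the Lie bracket is natural under a (locally invertible) parametrization, $[E,T]$ is the image of $[\partial_s,\partial_t]=0$.

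The main obstacle is not computational but one of well-definedness: a priori $E$ and $T$ are vector fields \emph{along} the map $H$ (their values lie in the fibres $T_{H(t,s)}M$, yet need not assemble into genuine vector fields on an open subset of $M$), so the symbol $[E,T]$ must first be given meaning. I would resolve this by invoking that on the patch $H$ is a regular parametrization, so that $(t,s)$ furnish local coordinates on the embedded image and $E,T$ are the corresponding coordinate frame fields, which commute automatically; at any isolated degeneracies of $H$ the identity persists by continuity. I would finally stress that this bracket identity is exactly the input required by Theorem~\ref{theorem::c}: combined with the torsion relation $\nabla_E T-\nabla_T E=[E,T]+\tau(E,T)$, the vanishing of $[E,T]$ reduces the antisymmetric part of the covariant second variation to $\tau(E,T)$, which the parallel-torsion hypothesis then lets us control in the Jacobi-field computation.
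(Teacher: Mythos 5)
Your proof is correct and takes essentially the same route as the paper's: testing $[E,T]$ against arbitrary smooth functions and invoking the equality of mixed partials $\partial_s\partial_t(g\circ H)=\partial_t\partial_s(g\circ H)$. Your additional discussion of well-definedness (that $E,T$ are a priori only vector fields along the map $H$, so the bracket needs the regularity of the parametrization to make sense) is a genuine subtlety that the paper's one-line proof silently glosses over, and addressing it strengthens rather than changes the argument.
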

\begin{proof}
  For any smooth function $\tilde g\in C^\infty(M)$ restricted on the patch, $[E,T]=0$ follows from the commutativity of partial derivatives $\partial_s\partial_t(\tilde g\circ H)=\partial_t\partial_s(\tilde g\circ H)$. 
\end{proof}

{\noindent\hspace{2em}{\itshape Proof of Theorem~\ref{theorem::c}: }}For a fixed timestamp $t$, consider points on two neighboring flow lines $\Phi^t(p_1)$ and $\Phi^t(p_2)$. Let the geodesic emanating from $\Phi^t(p_2)$ hitting $\Phi^t(p_1)$ be $H(\cdot, s)$ parameterized by $s$. Let these two flow lines be the pair of edges of the patch satisfying condition $(*)$. By self-similarity, $T=\partial_tH(t,s)$ is the family of velocity tangent vectors of some flow line restricted on the former geodesic parameterized by $s$. By parallel torsion and lemma~\ref{lemma::a}, we can establish $\nabla_{\cdot}\nabla_ET=\nabla_{\cdot}\nabla_TE$ from $\nabla_{\cdot}(\nabla_ET-\nabla_TE-[E,T])=0$. Direct calculation then yields:
\begin{align*}
  \nabla_E\nabla_ET&=\nabla_E\nabla_TE-0-0\\
  &=\nabla_E\nabla_TE-\nabla_T(\nabla_EE)-\nabla_{[E,T]}E\\
  &=\mathcal{R}(E,T)E=-\mathcal{R}(T,E)E
\end{align*}
which is exactly the Jacobi field equation on $H(\cdot, s)$.
\endproof

\subsection{Proof of Theorem~\ref{theorem::preint_property}}

\begin{proof}
    From the definition of preintegrability, an arbitrary flow line $\Phi^t(p)$ of the system is expressed as a curve parameterized by $t$ of the patch homotopy function $H(t,s):[0,1]\times[0,1]\rightarrow M$ with the form $H(t,s)=\varphi_t\circ\gamma(s)$ defined in Remark~\ref{remark::preint}, where $\varphi_t$ is a local one-parameter group of local diffeomorphisms on $M$ and $\gamma_s$ is a geodesical segment. Clearly, $\Phi^t(p)$ coincides with $\varphi^t$ in the domain of the latter. Note that the choice of $H(t,s)$ and thus $\gamma(s)$ is quite arbitrary. Condition $(*)$ further guarantees that $\partial_sH(t,s)$ is again a geodesic, making $\varphi_t$ a geodesic-preserving map at least in its domain. As a result, $\Phi^t(p)$, or $\varphi_t$ equivalently, happens to be the connection-preserving map in its domain, i.e. a local affine transformation.
\end{proof}

\subsection{Proof of Theorem~\ref{theorem::el_is_preint}}

\begin{proof}
  Let $\Phi^t$ be an arbitrary flow line of a system on a manifold $(M,\nabla)$ with parallel torsion and $\psi:U\rightarrow M$ be a local affine transformation. The defining properties of $\psi$ imply it maps a geodesical segment in $U$ to a curve segment in $f(U)$ which is again a geodesic, and vice versa. The commutativity of $\psi$ and $\exp$ yields
  \begin{equation*}
  \Phi^t(p)=\exp_{\Phi^t(\hat p)}\circ(d\psi^{u_t})(v)=\psi^{u_t}(\exp_{\hat p}(v))
  \end{equation*}. This can be interpreted as the equivalence of exact linearization and preintegrabitilty by theorem~\ref{theorem::preint_property}.
\end{proof}

\subsection{Proof of Theorem~\ref{theorem::d} and Its Corollary}

\begin{proof}
  Let $(X,Y,Z)$ and $(X^\prime,Y^\prime,Z^\prime)$ be $\bs{F}_t$-related vector fields defined in an open subset of $M$ containing the anchor point of $\bs{F}_t$. Denote $\mathcal{R}$ and $\tau$ to be the induced curvature and torsion tensors by $\nabla$. The conditions in this theorem are equivalent to requiring $\nabla^m\mathcal{R}(X,Y)Z$ and $\nabla^m\tau(X,Y)$ being $\bs{F}_t$-related to $\nabla^m\mathcal{R}(X^\prime,Y^\prime)Z^\prime$ and $\nabla^m\tau(X^\prime,Y^\prime)$ respectively. By theorem 7.2 of chapter VI of \cite{DiffGeo}, there exists a unique local affine transformation $\varphi_t$ in the neighborhood of the anchor point of $\bs{F}_t$, such that $d\varphi^t$ coincides with $\bs{F}_t$ at its anchor point. This has fulfilled the extra requirements for the existence of linear observed systems discussed before this theorem. $\nabla\mathcal{R}=0$ or $\mathcal{R}=0$ clearly satisfies the condition, proving the corollary.
\end{proof}

\subsection{Proof of Theorem~\ref{theorem::e} and Its Corollary}

\begin{lemma}\label{lemma::c}
  Consider a simply-connected complete manifold $(M,\nabla)$ with flat $\nabla$ and parallel torsion, then $M$ admits a Lie group structure whose parallel transport coincides with the left translation induced by group multiplication\cite{TheoConn}.
\end{lemma}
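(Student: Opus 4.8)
The plan is to reconstruct the group entirely from the parallel data that the connection supplies, following the classical theory of complete flat connections with parallel torsion. First I would exploit the two geometric hypotheses in tandem. Flatness $\mathcal{R}=0$ makes parallel transport locally path-independent; simple-connectedness then upgrades this to \emph{global} path-independence, since the holonomy group of a flat connection is a quotient of $\pi_1(M)$ and hence trivial when $M$ is simply-connected. Fixing a base point $e\in M$ and a basis $\{v_1,\dots,v_n\}$ of $T_eM$, parallel transport along any (now irrelevant) path yields a globally well-defined frame of parallel vector fields $\{E_1,\dots,E_n\}$ with $\nabla E_i=0$, trivializing $TM$.

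Next I would extract a Lie algebra from this frame. For parallel fields the torsion collapses the bracket, since $\nabla_{E_i}E_j=\nabla_{E_j}E_i=0$ forces $[E_i,E_j]=-\tau(E_i,E_j)$. Writing $\tau(E_i,E_j)=c_{ij}^k E_k$, the hypothesis of parallel torsion $\nabla\tau=0$ together with $\nabla E_k=0$ forces each structure function $c_{ij}^k$ to be constant on all of $M$. Hence $[E_i,E_j]=-c_{ij}^k E_k$ closes on a finite-dimensional real vector space $\mathfrak{g}=\mathrm{span}\{E_i\}$, and the Jacobi identity for vector-field brackets descends to the constants, so $(\mathfrak{g},[\cdot,\cdot])$ is a genuine Lie algebra.

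I would then integrate these infinitesimal generators into a global group. Completeness of $(M,\nabla)$ guarantees that every $E_i$, and thus every element of $\mathfrak{g}$, is a complete vector field, so its flow is defined for all time; a finite-dimensional Lie algebra of complete vector fields integrates to a Lie group action, and because the $E_i$ form a frame this action is transitive and locally free. Simple-connectedness ensures the developing map assembled from the parallel frame is a global diffeomorphism $M\cong G$ rather than a mere covering, where $G$ is the simply-connected group with Lie algebra $\mathfrak{g}$ furnished by Lie's third theorem. Under this identification the $E_i$ become left-invariant vector fields, and since parallel transport is the unique transport annihilating the $E_i$, it must coincide with the push-forward of left translation.

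The main obstacle is precisely this globalization step: locally the construction is elementary, but both hypotheses are indispensable in passing from a local Lie-algebra-valued structure to an honest global Lie group. Completeness is what makes the generating flows globally defined, without which one recovers only a local group; simple-connectedness is what prevents a discrete holonomy or covering obstruction from forcing $M$ to be a nontrivial quotient of $G$. This is exactly the content invoked from \cite{TheoConn}, so rather than reprove Lie's third theorem or the integrability of complete vector-field algebras in detail, I would cite that reference for the precise integration argument and confine the new work to verifying that the parallel frame and parallel torsion yield the required constant structure constants.
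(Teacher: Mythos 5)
Your proposal is correct, but it is much more than what the paper does: the paper's entire ``proof'' of this lemma is a one-line citation to N.\ Hicks \cite{TheoConn}, with no argument supplied. What you have written is essentially a reconstruction of the classical proof behind that citation, and it holds up: trivial holonomy from flatness plus simple-connectedness gives a global parallel frame $\{E_i\}$; the identity $\tau(E_i,E_j)=-[E_i,E_j]$ for parallel fields together with $\nabla\tau=0$ gives constant structure constants, hence a Lie algebra $\mathfrak{g}$; completeness lets the fields integrate to a global transitive, locally free action, and simple-connectedness kills the discrete stabilizer so $M\cong G$, under which parallel transport and left translation agree because both fix the frame. The trade-off is the usual one: the paper buys brevity and defers all geometric content to the literature, whereas your version is self-contained and makes visible exactly where each of the three hypotheses (flat, parallel torsion, complete plus simply-connected) enters, which is genuinely useful since the surrounding theorems (Theorem~\ref{theorem::e} and Corollary~\ref{corollary::b}) lean on precisely this structure.

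One step you should make explicit rather than assert: you claim geodesic completeness of $(M,\nabla)$ makes each $E_i$ a complete vector field, but these are a priori different notions of completeness. The link is that an integral curve $\gamma$ of a parallel field $E$ satisfies $\nabla_{\dot\gamma}\dot\gamma=\nabla_E E\big|_\gamma=0$, i.e.\ integral curves of parallel fields \emph{are} geodesics, so geodesic completeness does imply their flows exist for all time (and the same holds for any constant-coefficient combination $\sum a^iE_i$, which is again parallel). With that one line inserted, the integration step via Palais/Lie~III is sound; the only remaining wrinkle is the standard left-versus-right invariance convention when identifying the frame fields with invariant fields on $G$, which affects bookkeeping but not the conclusion.
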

\begin{proof}
  Proved by N. Hicks in \cite{TheoConn}.
\end{proof}

{\noindent\hspace{2em}{\itshape Proof of Theorem~\ref{theorem::e}: }}The proof of the first half follows directly by combining corollary~\ref{corollary::parallel_zero_curvature} and lemma~\ref{lemma::c}. Since Lie group is analytic, the local affine transformation in the definition of linear observed property can be uniquely extended to the entire of the manifold $M$ with simply-connectedness. The affine transformation preserves $\nabla$, and thus preserves the multiplication induced by $\nabla$, making it an automorphism.
\endproof

{\noindent\hspace{2em}{\itshape Proof of Corollary~\ref{corollary::b}: }}A direct consequence of theorem~\ref{theorem::e} and the definition of linear observed property by $\nabla$.
\endproof

\subsection{Proof of Theorem~\ref{theorem::example_a}}

\begin{proof}
  The $\nabla$ herein is compatible with the round metric on $S^2$, implying the affine transformation group coincides with the isometry group of $S^2$. The largest isometry group of $S^2$ is ${\rm O}(3)$. By the definition of exact linearization, the linear isomorphism $\bs{F}_t:T_{p_0}S^2\rightarrow T_{q_0}S^2$ evolves on ${\rm O}(3)$. Since $\frac{d}{dt}\bs{F}_t=\bs{A}_{u_t}\bs{F}_t$, we conclude $\bs{A}_{u_t}$, as a function of the input $u_t$, evolves in the space of $\dot{\bs{F}}_t\bs{F}_t^\top$, i.e. the Lie algebra of ${\rm O}(3)$. It follows that all linear observed systems on $S^2$ with this $\nabla$ are in the form of $\dot{\bs{q}}=\bs{\sigma}^\times_{u_t}\bs{q}$ including (\ref{eq::partial_state_ahrs}), where $\bs{\sigma}_{u_t}$ is a smooth function of $V\rightarrow\mathbb{R}^3$.
\end{proof}







\bibliographystyle{IEEEtran.bst}
\bibliography{refs.bib}

\end{document}